\def\Ddots{\mathinner{\mkern1mu\raise\p@
\vbox{\kern7\p@\hbox{.}}\mkern2mu
\raise4\p@\hbox{.}\mkern2mu\raise7\p@\hbox{.}\mkern1mu}}
\newcommand{\nats}{{\mathbb N}}
\renewcommand{\alph}{\textrm{Alph}}
\newcommand{\Pal}{\textrm{PAL}}
\newcommand{\Fac}{\textrm{Fac}}
\newcommand{\MP}{\textit{MinPal}}
\renewcommand{\epsilon}{\varepsilon}
\newtheorem{thm}{Theorem}
\newtheorem{theorem}{Theorem}[section]
\newtheorem{lemma}[theorem]{Lemma}
\newtheorem{corollary}[theorem]{Corollary}
\newtheorem{proposition}[theorem]{Proposition}
\newtheorem{remark}[theorem]{Remark}
\newtheorem{claim}[theorem]{Claim}
\theoremstyle{definition}
\begin{document}

\begin{frontmatter}

\title{\textbf{On the Least Number of Palindromes\\ Contained in an Infinite Word}}

\author{Gabriele Fici\corref{cor1}}
\ead{gabriele.fici@unipa.it}
\address{Dipartimento di Matematica e Informatica, Universit\`a di Palermo, Italy}

\author{Luca Q. Zamboni}
 \ead{lupastis@gmail.com}
 \address{Universit\'e Claude Bernard Lyon 1, France, and University of Turku, Finland}

\cortext[cor1]{Corresponding author.}

\journal{Theoretical Computer Science}

\begin{abstract}
We investigate the least number of palindromic factors in an infinite word. We first consider general alphabets, and give answers to this problem for periodic and non-periodic words, closed or not under reversal of factors. We then investigate the same problem when the alphabet has size two.
\end{abstract}

\begin{keyword}
Combinatorics on words; palindromes.
\end{keyword}

\end{frontmatter}


\section{Introduction}

In recent years, there has been an increasing interest in the importance of palindromes in mathematics, theoretical computer science and theoretical physics. In particular, one is interested in infinite words containing arbitrarily long palindromes. This stems from their role in the modeling of {\em quasicrystals} in theoretical physics (see for instance~\cite{dDlZ03comb,aHoKbS95sing}) and also diophantine approximation in number theory (e.g., see~\cite{sF06pali2,AB1,AB2,AB3,AA,BL,Roy1,Roy2}).

In~\cite{xDjJgP01epis},  X.~Droubay, J. Justin and G.~Pirillo observed that any finite word $w$ of {\em length} $|w|$  contains at most $|w| + 1$ distinct palindromes (including the empty word).  Such words  are `rich' in palindromes, in the sense that they contain the maximum number of different palindromic factors. Accordingly, we say that a finite word $w$ is {\it rich} if it contains exactly $|w| + 1$ distinct palindromes, and we call an infinite word {\it rich} if all of its factors are rich. In an independent work, P.~Ambro{\v{z}}, C.~Frougny, Z.~Mas{\'a}kov{\'a} and E.~Pelantov{\'a}~\cite{pAzMePcF06pali}  have considered the same class of words, which they call {\it full words} (following earlier work of S.~Brlek, S.~Hamel, M.~Nivat, and C.~Reutenauer~\cite{Brl}). Since~\cite{xDjJgP01epis}, there is an extensive number of papers devoted to the study of rich words and their generalizations (see~\cite{jAmBjCdD03pali,pAzMePcF06pali,pBzMeP07fact,Brl,BDGZ1,BDGZ2,DLGZ,GJWZ}).

In this note we consider the opposite question: What is the least number of palindromes which can occur in an infinite word subject to certain constraints? For an infinite word $\omega$, the set $\Pal(\omega)$ of palindromic factors of $\omega$ can be finite or infinite (cf.~\cite{Brl}). For instance, in case $\omega$ is a Sturmian word, then $\Pal(\omega)$ contains two elements of each odd length and one element of each even length. In fact, this property characterizes Sturmian words (see~\cite{aD97stur} and \cite{xDgP99pali}). In contrast, the {\it paperfolding word} $P$ is an example of an aperiodic uniformly recurrent  word  containing a finite number of palindromes: 
\[P=aabaabbaaabbabbaaabaabbbaabbabbaaaba\cdots\] It is obtained as the limit of the sequence $(P_{n})_{n\ge 0}$ defined recursively by $P_{0}=a$ and $P_{n+1}=P_{n}a\hat{P}_{n}$ (for $n\geq 0)$,  where $\hat{P}_n$ is the word obtained from $\tilde{P}_n$ by exchanging $a$'s and $b$'s \cite{All}. J.-P. Allouche showed that the paperfolding word contains exactly $29$ palindromes, the longest of which has length $13.$

It is easy to see that any uniformly recurrent word $\omega $ which contains an infinite number of palindromes must be {\it closed under reversal}, that is, for every factor $u=u_1u_2\cdots u_n$ of $\omega,$ its {\it reversal}  $\tilde{u}=u_n\cdots u_2u_1$ is also a factor of $\omega.$ The converse is not true: In fact, J. Berstel, L. Boasson, O. Carton and I. Fagnot~\cite{Ber} exhibited various examples of uniformly recurrent words closed under reversal and containing a finite number of palindromes. 
The paperfolding word is not closed under reversal, since for example it contains the factor $aaaba$ but not  $abaaa.$

If $X$ is a set consisting of infinite words, we set
\[\MP (X)=\textrm{inf}\{\#\Pal (\omega)\mid \omega \in X\}.\] 
 
We first show that without restrictions on the cardinality of the alphabet, one has that $\MP=4$. That is, for 
\[W=\{\omega \in A^\nats \mid 0<\#A<\infty\},\]
we have $\MP(W)=4.$ If in addition one requires that the word be aperiodic, that is, for
\[W_{\textrm{ap}}=\{\omega \in A^\nats \mid 0<\#A<\infty\,\,\mbox{and}\,\,\omega \,\,\mbox{is aperiodic}\},\]
 then $\MP(W_{\textrm{ap}})=5.$ If moreover one requires that the word must be closed under reversal, that is, for
 \[W_{\textrm{cl}}=\{\omega \in A^\nats \mid 0<\#A<\infty\,\,\mbox{and}\,\,\omega \,\,\mbox{is closed under reversal}\},\] then one still has  $\MP(W_{\textrm{cl}})=5.$\\

\noindent In the case of binary words we show the following:

\begin{thm} Let $A$ be a set with $\#A=2.$  Then:
\begin{enumerate}
\item $\MP(A^\nats)=9$, where $A^\nats$ denotes the set of all infinite words on $A.$
\item $\MP(A_{\textrm{ap}}^\nats)=11$, where $A_{\textrm{ap}}^\nats$ denotes the set of all aperiodic words in $A^\nats.$
\item $\MP(A_{\textrm{cl}}^\nats)=13$, where  $A_{\textrm{cl}}^\nats$ denotes the set of all words in $A^\nats$ closed under reversal.
\item $\MP(A_{\textrm{ap/cl}}^\nats)=13$, where $A_{\textrm{ap/cl}}^\nats$ denotes the set of all aperiodic words in $A^\nats$ closed under reversal.
\end{enumerate}
\end{thm}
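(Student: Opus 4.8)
The plan is to prove each of the four equalities by pairing a \emph{lower bound} (every word in the class has at least the stated number of palindromes) with an \emph{explicit construction} meeting it. Three tools are used throughout. First, two standing reductions: if either letter occurs only finitely often then $\omega$ is eventually constant and $\#\Pal(\omega)=\infty$, so both letters may be assumed to occur infinitely often; and if the maximal runs of $a$ (or of $b$) have unbounded length then all of $a^{k}$ (resp.\ $b^{k}$) are palindromic factors and again $\#\Pal(\omega)=\infty$, so all runs may be assumed bounded, which makes $\Pal(\omega)$ finite with a longest element. Second, the \emph{peeling} observation: deleting the first and last letters of a palindrome of length $\ge 2$ leaves a palindromic factor of length two smaller, so if $\omega$ has no palindrome of some length $\ell$ and none of length $\ell+1$, then it has none of length $\ge\ell$; this caps the longest palindrome and lets one enumerate $\Pal(\omega)$ by building up from short factors. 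Third, the containment $A_{\textrm{ap/cl}}^{\nats}\subseteq A_{\textrm{cl}}^{\nats}$, so the lower bound $\ge 13$ of part (3) applies verbatim to part (4) and only a construction is needed there.

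For the constructions I would proceed as follows. For (1) take the periodic word $\omega=(aababb)^{\omega}$: its factors of length $\le 4$ give exactly the palindromes $\epsilon,a,b,aa,bb,aba,bab,abba,baab$, while a direct check shows it has no palindrome of length $5$ or $6$, whence by peeling none of length $\ge 5$, so $\#\Pal(\omega)=9$. For (2) I would exhibit an aperiodic concatenation of two blocks refining the $aababb$ pattern and verify, again by peeling, that exactly two further palindromes appear, giving $11$. For (3) and (4) I would first prove that a periodic (indeed eventually periodic) word closed under reversal must be of the form $(pq)^{\omega}$ with $p,q$ palindromes and hence contains arbitrarily long palindromes; therefore \emph{any} reversal-closed word with finitely many palindromes is aperiodic. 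Consequently a single aperiodic, reversal-closed word with exactly $13$ palindromes serves both (3) and (4), and I would produce one as the fixed point of a reversal-respecting morphism, counting its palindromes by peeling.

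The lower bounds are the heart of the matter. For (1), after the reductions one always has $\epsilon,a,b$ and at least one of $aa,bb$; I would then run a finite case analysis on the run lengths and on which short palindromes ($aa,bb,aaa,bbb,aba,bab,abba,baab,\dots$) occur, exhibiting in each case at least nine distinct palindromes, the peeling cap guaranteeing the enumeration terminates. For (2) I would assume $\#\Pal(\omega)\le 10$ and deduce that $\omega$ is eventually periodic: the very small palindrome set, via peeling, constrains the admissible short factors and their left/right extensions so tightly that the Rauzy graphs stabilize and $\omega$ falls into a periodic pattern, contradicting aperiodicity; hence aperiodic binary words need $\ge 11$. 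For (3) I would feed the extra symmetry coming from reversal closure, namely that $\tilde u$ is a factor whenever $u$ is, into the case analysis of (1), showing that it forces enough additional palindromes to reach thirteen.

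The main obstacle is exactly the two lower bounds sketched for (2) and (3): quantifying how a prescribed, very small palindrome set constrains the factor complexity. Establishing that $\le 10$ palindromes forces eventual periodicity, and that reversal closure forces the jump from nine to thirteen, each requires a careful finite analysis of bispecial factors and their extensions, with the peeling lemma as the driving engine. The bookkeeping—systematically ruling out every way a word might avoid a forced palindrome while keeping $\#\Pal(\omega)$ below the target—is where the real difficulty lies, whereas the reductions above and the explicit constructions are routine once the candidate extremal words have been guessed.
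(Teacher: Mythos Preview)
Your high-level plan and the reductions (bounded runs, both letters infinitely often, the peeling observation, and the equivalence of (3) and (4) via the fact that periodic reversal-closed words have infinitely many palindromes) are all sound and match the paper. The construction $(aababb)^\infty$ for (1) is exactly what the paper uses, and your description of the lower bounds for (1) and (2) is close in spirit to the paper's direct-inspection arguments (the paper simply enumerates all binary words of length $9$, respectively $12$ and $14$, by computer and reads off the periodicity conclusions).

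There are, however, two concrete gaps. First, your proposed construction for (3)/(4)---the fixed point of a ``reversal-respecting morphism''---runs into a theorem of Tan cited in the paper: if $\omega$ is the fixed point of a \emph{primitive} morphism, then $\omega$ is closed under reversal if and only if $\#\Pal(\omega)=\infty$. So any primitive morphism will fail outright, and for non-primitive ones it is unclear how to get reversal closure together with finitely many palindromes. The paper sidesteps this by a recursive doubling construction: $U_0=abaabbabaaabbaaba$, $U_{2n+1}=U_{2n}\,bbaa\,\tilde U_{2n}$, $U_{2n+2}=U_{2n+1}\,aabb\,\tilde U_{2n+1}$, with an inductive verification that $\#\Pal(U_n)=13$ for all $n\ge 2$.

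Second, and more seriously, your sketch of the lower bound for (3) (``feed the reversal symmetry into the case analysis of (1)'') is missing the paper's key idea. The argument hinges on an \emph{extremality} lemma: take $\omega\in A_{\textrm{cl}}^\nats$ with $\#\Pal(\omega)=\MP(A_{\textrm{cl}}^\nats)$; then whenever $a^k$ is a factor of $\omega$, so is $ba^kb$ (and dually). The proof is by contradiction---if some $a^k$ occurs but $ba^kb$ does not, one shortens every $a$-run of length $>k$ by one and obtains another reversal-closed word with \emph{strictly fewer} palindromes, contradicting minimality. This lemma is what makes the case analysis tractable: once $aa$ and $bb$ are present it immediately hands you the nine palindromes $\{\varepsilon,a,b,aa,bb,aba,bab,abba,baab\}$, and thereafter every block $a^k$ or $b^k$ comes wrapped in a longer palindrome $ba^kb$ or $ab^ka$, so that each extra run length costs at least two palindromes. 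Reversal closure by itself (your proposal) only tells you that $\tilde u$ is a factor when $u$ is; it does not force these wrapped palindromes, and without the extremality trick the case analysis for thirteen does not close.
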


\section{Definitions and Notations}

Given a finite non-empty set $A$ (called the {\it alphabet}), we denote by $A^*$  and $A^\nats$   respectively the set of finite words and the set of (right) infinite words over the alphabet $A$. Given a finite word $u =a_1a_2\cdots a_n$ with $n \geq 1$ and $a_i \in A,$ we denote the length $n$ of $u$ by $|u|.$ The  \textit{empty word} will be denoted by $\varepsilon$ and we set $|\varepsilon|=0.$ We put $ A^+= A^*-\{\varepsilon\}.$  For $u,v \in A^+$ we denote by  $|u|_v$ the number of occurrences of $v$ in $u.$ For instance $|0110010|_{01}=2.$ We denote the reverse of $u$ by $\tilde{u},$ i.e., $\tilde{u}=a_n\cdots a_2a_1.$ 

Given a finite or  infinite word $\omega =\omega_0\omega_1\omega_2\cdots $ with $\omega_i\in A,$ we say a word $u\in A^+$ is a  {\it factor} of $\omega$ if  $u=\omega_{i}\omega_{i+1}\cdots \omega_{i+n}$ for some natural numbers $i$ and $n.$
We denote by $\Fac(\omega)$ the set of all factors of $\omega,$ 
and by $\alph(\omega)$ the set of all factors of $\omega$ of length $1.$ 
Given (non-empty) factors $u$ and $v$ of $\omega,$ we say $u$ is a {\it first return} to $v$ {\it in} $\omega$ if $uv$ is a factor of $\omega$ which begins and ends in $v$ and $|uv|_v=2.$ If $u$ is a first return to $v$ in $\omega$ then $uv$ is called a {\it complete first return} to $v$ in $\omega$.

A factor $u$ of $\omega$ is called {\it right special} if both $ua$ and $ub$ are factors of $\omega$ for some pair of distinct letters $a,b \in A.$ Similarly, $u$ is called {\it left special} if both $au$ and $bu$ are factors of $\omega$ for some pair of distinct letters $a,b \in A.$ The factor $u$ is called $bispecial$ if it is both right special and left special.
For each factor $u$ of $\omega$, we  set
\[\omega\big|_{u}=\{ n\in \nats \mid \omega_n\omega_{n+1}\cdots \omega_{n+|u|-1}=u\}.\]
We say $\omega$ is {\it recurrent} if for every $u\in \Fac(\omega)$  the set $\omega\big|_u$ is infinite.
We say $\omega$ is {\it uniformly recurrent} if for every $u\in \Fac(\omega)$  the set $\omega\big|_u$ is syndedic, i.e., of bounded gap. A word $\omega\in A^\nats$  is {\it (purely) periodic} if there exists a positive integer $p$ such that $\omega_{i+p} = \omega_i$ for all indices $i$, and it is {\it ultimately periodic} if $\omega_{i+p} = \omega_i$ for all sufficiently large $i$. For a finite word $u=a_1a_2\cdots a_n$, we call $p$ a \emph{period} of $u$ if $a_{i+p}=a_{i}$ for every $1 \le i \le n-p$, and we denote by $\pi(u)$ the least period of $u$. Finally, a word $\omega\in A^\nats$ is called {\it aperiodic} if it is not ultimately periodic. 
Two finite or infinite words are said to be {\it isomorphic} if the two words are equal up to a renaming of the  letters. We denote by $[\omega]$ the set of words that are isomorphic to $\omega $ or to $\tilde{\omega}$. Note that any word in $[\omega]$ has the same periods as $\omega$.

We denote by $\Pal(\omega)$ the set of all palindromic factors of $\omega,$ i.e., the set of all factors $u$ of $\omega$ with $\tilde{u}=u.$
We have that $\Pal(\omega)$ contains at least $\varepsilon$ and $\alph(w)$.

\section{General alphabets}

The following lemma follows from a direct inspection.

\begin{lemma}\label{lem:rich}
Every word $w$ of length $9$ such that $\alph(w)=2$ contains at least $9$ palindromes.
\end{lemma}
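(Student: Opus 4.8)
The plan is to recast the statement in terms of palindromic \emph{richness}. By the bound of Droubay, Justin and Pirillo~\cite{xDjJgP01epis}, every finite word $w$ satisfies $\#\Pal(w)\le |w|+1$, so a binary word of length $9$ has at most $10$ palindromic factors (counting $\varepsilon$). Writing $d(w)=|w|+1-\#\Pal(w)$ for the number of ``missing'' palindromes, it therefore suffices to prove that $d(w)\le 1$ whenever $|w|=9$ and $\alph(w)=2$; the desired inequality $\#\Pal(w)\ge 9$ is then immediate. I would track palindromes prefix by prefix: for $1\le i\le 9$ set $P(i)=\#\Pal(w_1\cdots w_i)$. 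Since appending one letter can create at most one new palindrome, namely the longest palindromic suffix of the new prefix (and only when that suffix does not already occur earlier), one has $P(i)-P(i-1)\in\{0,1\}$. Call $i$ a \emph{defect position} when the increment is $0$. Then $d(w)$ equals the number of defect positions, and I must show there is at most one.

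The key reduction is the claim that no defect position can occur before position $8$, i.e.\ every binary word of length $\le 7$ is rich. As the authors indicate, this is a direct inspection, which I would organize using the observation that a defect at position $i$ means the longest palindromic suffix $s$ of $w_1\cdots w_i$ has a previous occurrence; the complete first return to $s$ ending at $i$ is then a non-palindromic factor of $w_1\cdots w_i$. A short case analysis on the palindrome $s$ shows that the shortest non-palindromic complete first return is of length $8$: for $s$ a single letter the returns $s\,c^k s$ are palindromes, and for $s\in\{aa,bb\}$ one checks that the only non-palindromic returns of length $\le 8$ are, up to exchanging the two letters and reversal, the word $aabbabaa$ (longer palindromes $s$ force still longer returns). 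Hence in a word of length $9$ the only candidate defect positions are $8$ and $9$.

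It then remains to exclude a simultaneous defect at $8$ and $9$. A defect at position $8$ forces $w_1\cdots w_8$ to be a non-rich word of length $8$; by the previous step this word is, up to renaming the letters and reversal, exactly $aabbabaa$. (Here I would also dispose of the possibility that the longest palindromic suffix at position $8$ is a single letter $c$: an easy descending induction shows this forces $w_1\cdots w_8=\bar{c}^{\,7}c$, whose last letter is a \emph{first} occurrence, hence not a defect.) Appending $w_9$ to $aabbabaa$ yields either $aabbabaaa$ or $aabbabaab$, whose longest palindromic suffixes are $aaa$ and $baab$ respectively, and both are new; so position $9$ is not a defect, and by symmetry the same holds across the orbit of $aabbabaa$. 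Therefore $d(w)\le 1$ and $\#\Pal(w)\ge 9$.

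The main obstacle is the reduction in the second paragraph: rigorously establishing that $8$ is the minimal length of a non-rich binary word and pinning down its unique form up to symmetry. Once this finite classification is in hand, the exclusion of a double defect in the third paragraph is a routine two-case check, and the rest is bookkeeping with the increment bound $P(i)-P(i-1)\in\{0,1\}$.
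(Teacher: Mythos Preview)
Your argument is correct and genuinely different from the paper's. The paper proves the lemma by brute force (``direct inspection'' of all binary words of length $9$), whereas you reformulate it via the palindromic defect $d(w)=|w|+1-\#\Pal(w)$ and reduce to the known fact that the shortest non-rich binary word has length $8$, together with a two-case check that no length-$9$ extension of such a word produces a second defect. This is more conceptual: it explains \emph{why} nine palindromes are forced, ties the lemma to the richness literature, and would adapt to nearby questions (e.g.\ length $10$) without redoing a full enumeration. The paper's approach, by contrast, is immediate and requires no theory, which suits its role as an auxiliary observation. Two small points of exposition in your sketch are worth tightening: the parenthetical ``longer palindromes $s$ force still longer returns'' is literally false for returns in general (e.g.\ $ababa$ is a length-$5$ return to $aba$) and should instead say that \emph{non-palindromic} complete first returns to any binary palindrome have length at least $8$; and the passage from ``there is a non-palindromic return of length $\le 8$'' to ``$w_1\cdots w_8$ lies in the orbit of $aabbabaa$'' uses that this return is a suffix of $w_1\cdots w_8$ of length exactly $8$, hence equals it. Both are easy to make explicit.
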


An application of the previous lemma is the following.

\begin{proposition}
 Every infinite word contains at least $4$ palindromes.
\end{proposition}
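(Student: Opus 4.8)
The plan is to argue by cases on the number of distinct letters occurring in $\omega$, exploiting the fact recorded just above that $\Pal(\omega)$ always contains the empty word $\varepsilon$ together with every letter of $\alph(\omega)$.

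First, if $\#\alph(\omega)\ge 3$, then $\varepsilon$ together with three distinct single letters already furnishes four distinct palindromes, and there is nothing more to prove. Second, if $\#\alph(\omega)=1$, then $\omega=aaa\cdots$ is the constant word over a single letter $a$, every factor $a^n$ is a palindrome, and hence $\Pal(\omega)$ is infinite; in particular it has at least four elements.

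The remaining and only substantive case is $\#\alph(\omega)=2$, say $\alph(\omega)=\{a,b\}$. Here I would invoke Lemma~\ref{lem:rich}: it suffices to exhibit a single factor $w$ of $\omega$ with $|w|=9$ and $\#\alph(w)=2$, since then $w$ alone contains at least nine palindromes, all of which are palindromic factors of $\omega$, giving far more than the required four. To produce such a factor, note that because both $a$ and $b$ occur in $\omega$ there must be an index $k$ with $\omega_k\neq\omega_{k+1}$ (were $\omega_k=\omega_{k+1}$ for every $k$, the word $\omega$ would be constant, contradicting $\#\alph(\omega)=2$). As $\omega$ is infinite, the factor $w=\omega_k\omega_{k+1}\cdots\omega_{k+8}$ of length $9$ exists; it contains the two distinct letters $\omega_k$ and $\omega_{k+1}$, so $\#\alph(w)=2$. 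Applying Lemma~\ref{lem:rich} to $w$ then completes this case, and with it the proposition.

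The only point requiring a moment's care is the last one: an infinite binary word using both letters really must contain a length-$9$ factor on both letters. This is precisely where the hypothesis that $\omega$ is infinite is used, so that one can always extend eight positions to the right of a letter change, and it is the single step where one must avoid the trap of choosing a window that falls inside a long monochromatic run. Everything else is immediate from the observation that $\varepsilon$ and the individual letters are automatically palindromic.
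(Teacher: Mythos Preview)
Your proof is correct and follows essentially the same approach as the paper: reduce to the binary case and invoke Lemma~\ref{lem:rich}. You are simply more explicit than the paper, spelling out the trivial cases $\#\alph(\omega)\ge 3$ and $\#\alph(\omega)=1$ and justifying the existence of a length-$9$ factor using both letters, details the paper leaves to the reader.
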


\begin{proof}
 The empty word and the letters are palindromes. Therefore, if an infinite word $\omega$ contains only $3$ palindromes then $\alph(\omega)=2$. This is in contradiction with Lemma \ref{lem:rich}. 
\end{proof}

\noindent We have the following characterization of words containing only $4$ palindromes.

\begin{proposition}\label{prop:4}
If an infinite word contains exactly $4$ palindromes, then it is of the form $u^{\infty}$ where $u$ is of the form $u=abc$ with $a,b,$ and $c$ distinct letters. 
\end{proposition}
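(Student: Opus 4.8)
The plan is to first pin down the alphabet size and then exploit the absence of short palindromes to force periodicity. Since $\varepsilon$ and every single letter are palindromes, an infinite word $\omega$ with exactly $4$ palindromes must have $\#\alph(\omega)\le 3$. A one-letter alphabet is impossible, since $a^{\infty}$ contains infinitely many palindromes; a two-letter alphabet is ruled out by Lemma~\ref{lem:rich}, because an infinite word over $\{a,b\}$ using both letters has a factor of length $9$ with $\alph=2$, hence at least $9$ palindromes. So I would conclude $\#\alph(\omega)=3$, say $\alph(\omega)=\{a,b,c\}$, and that the four palindromes are exactly $\varepsilon,a,b,c$. In particular $\omega$ has no palindromic factor of length $\ge 2$.

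The next step is to convert this into local constraints on consecutive letters. Writing $\omega=\omega_0\omega_1\omega_2\cdots$, the absence of palindromes of length $2$ (the squares $aa$, $bb$, $cc$) gives $\omega_i\neq\omega_{i+1}$ for all $i$, and the absence of palindromes of length $3$ (the factors of the form $xyx$) gives $\omega_i\neq\omega_{i+2}$ for all $i$. Applying the first constraint also at index $i+1$ and combining the three inequalities, I obtain that $\omega_i,\omega_{i+1},\omega_{i+2}$ are pairwise distinct, so each window of three consecutive letters is a permutation of $\{a,b,c\}$.

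Finally I would propagate this to periodicity. Since $\omega_{i+1},\omega_{i+2},\omega_{i+3}$ is likewise a permutation of $\{a,b,c\}$ and already contains $\omega_{i+1}$ and $\omega_{i+2}$, the letter $\omega_{i+3}$ must be the unique remaining symbol, namely $\omega_i$. Hence $\omega_{i+3}=\omega_i$ for every $i$, so $\omega=u^{\infty}$ with $u=\omega_0\omega_1\omega_2$, and $u$ is a permutation of three distinct letters, i.e. of the form $abc$ with $a,b,c$ distinct, as claimed.

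The argument is short, and the only genuinely delicate point is the reduction to an alphabet of size exactly three: one must apply Lemma~\ref{lem:rich} to a length-$9$ factor that really does use both letters, which in turn requires observing that an infinite binary word always exhibits two adjacent distinct letters and hence such a factor. Everything after this reduction is a direct local forcing, so I do not expect any serious obstacle in the combinatorial core.
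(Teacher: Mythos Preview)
Your proof is correct and follows essentially the same approach as the paper: reduce to a three-letter alphabet via Lemma~\ref{lem:rich}, forbid factors of the form $xx$ and $xyx$, and conclude periodicity. The paper's proof is terser, leaving the final deduction of $\omega_{i+3}=\omega_i$ implicit, whereas you spell out the pigeonhole argument on consecutive triples explicitly.
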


\begin{proof}
Let $\omega$ be an infinite word containing exactly $4$ palindromes. By Lemma \ref{lem:rich}, it follows that  $\alph(\omega) = 3$.  Should $\omega$ contain a factor of the form $aa$ or of the form $aba$, then $\omega$ would contain at least $5$ palindromes. The statement now follows.
\end{proof}

\begin{corollary}
 Every non-periodic infinite word contains at least $5$ palindromes.
\end{corollary}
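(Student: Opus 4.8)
The plan is to obtain this as an immediate consequence of the two preceding propositions via a contrapositive argument, so the real work has already been done upstream. First I would argue by contradiction: suppose $\omega$ is a non-periodic infinite word containing at most $4$ palindromes. The proposition stating that every infinite word contains at least $4$ palindromes then forces $\omega$ to contain \emph{exactly} $4$ palindromes, pinning down the count precisely.

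Next I would invoke the structural classification in Proposition~\ref{prop:4}: any infinite word with exactly four palindromes has the form $u^\infty$ with $u=abc$ for three pairwise distinct letters $a,b,c$. In particular $\omega=(abc)^\infty$ is purely periodic (with least period $3$), and hence in particular not non-periodic. This contradicts our standing assumption, so $\omega$ must in fact contain at least $5$ palindromes, which is exactly the assertion of the corollary.

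Since both ingredients are already established, I do not expect any genuine obstacle here; the entire content of the corollary is carried by the classification in Proposition~\ref{prop:4}, and this step merely repackages it. The only point requiring a moment's care is bookkeeping about the meaning of ``non-periodic'': whether one reads it as ``not purely periodic'' or as ``not ultimately periodic'' (i.e.\ aperiodic, in the sense of Section~2), the word $(abc)^\infty$ is purely periodic and therefore excluded under either reading, so the contradiction goes through regardless. Thus the argument is robust to this ambiguity and the corollary follows directly.
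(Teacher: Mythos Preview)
Your proposal is correct and matches the paper's approach: the corollary is stated without proof in the paper, as it follows immediately from Proposition~\ref{prop:4} (together with the preceding proposition that every infinite word has at least $4$ palindromes) via exactly the contrapositive argument you describe. Your remark on the reading of ``non-periodic'' is apt but, as you note, inconsequential here.
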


In fact, there exist non-periodic uniformly recurrent words containing only $5$ palindromes. Let $F$ be the Fibonacci word, that is the word \[F=abaababaabaababaababaabaab\cdots\]
obtained as the limit of the sequence $(f_{n})_{n\ge 0}$, where $f_{0}=b$, $f_{1}=a$ and $f_{n+1}=f_{n}f_{n-1}$. The image of the Fibonacci word $F$ under the morphism $\phi: a\mapsto a, b\mapsto bc$,
\[\phi(F)=abcaabcabcaabcaabcabcaabcabca \cdots\] 
contains only 5 palindromes, namely: $\epsilon$, $a$, $b$, $c$ and $aa$. Note that the word $\phi(F)$ is not closed under reversal, since for example it does not contain the reversal of the factor $bc$. 

Berstel et al.~\cite{Ber} exhibited a uniformly recurrent word over a four-letter alphabet closed under reversal and containing only $5$ palindromes (the letters and the empty word):
\[\omega=abcdbacdabdcbacdabcdbadcabdcba\cdots\] defined as the limit of the sequence $(U_{n})_{n\ge 0}$, where $U_{0}=ab$ and $U_{n+1}=U_{n}cd\tilde{U}_{n}$.

\section{Binary alphabet}

In this section we fix a binary alphabet $A=\{a,b\}$.
As a consequence of Lemma \ref{lem:rich}, every infinite word over $A$ contains at least $9$ palindromes. 

By direct computation, if $w$ is a word over $A$ of length $12,$ then $\#\Pal(w)\geq 9$ and $ \#\Pal(w)=9$ if and only if $w=u^2$ where $u\in [v]$ and $v=aababb.$  Indeed, for each $u\in [v]$ one has \[\Pal(u^{2})=\{\epsilon,a,b,aa,bb,aba,bab,abba,baab\}.\]
Since no palindrome of length $5$ or $6$ occurs in $u^{2}$, the word $u^{\infty}$ contains only $9$ palindromes. Moreover, for each  $u\in [v]$ and $\alpha \in A$, if $\pi(u^{2}\alpha)\neq 6$, then $u^{2}\alpha$ contains at least $10$ palindromes. So we have:

\begin{proposition}\label{prop:9per}
Let $v=aababb$. An infinite word over $A$ contains exactly $9$ palindromes if and only if it is of the form $u^{\infty}$ for some $u\in [v]$. In particular it is  periodic  of period $6$.
\end{proposition}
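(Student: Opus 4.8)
The plan is to establish the two implications separately, using throughout the two facts recorded just before the statement: (i) every length-$12$ word $w$ over $A$ satisfies $\#\Pal(w)\ge 9$, with equality if and only if $w=u^2$ for some $u\in[v]$; and (ii) for $u\in[v]$ and $\alpha\in A$, if $\pi(u^2\alpha)\neq 6$ then $u^2\alpha$ contains at least $10$ palindromes. Since conjugate periodic words have the same set of factors, and hence the same palindromes, I read both facts up to conjugacy, i.e.\ for every $u$ that is a conjugate of a word in $[v]$.

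For the implication ($\Leftarrow$), let $\omega=u^\infty$ with $u\in[v]$. Every factor of $\omega$ of length at most $6$ already occurs in $u^2$, and $\Pal(u^2)=\{\epsilon,a,b,aa,bb,aba,bab,abba,baab\}$ contains no palindrome of length $5$ or $6$. Any palindrome of length $\ell\ge 5$ contains, as its central factor, a palindrome of length $5$ (when $\ell$ is odd) or of length $6$ (when $\ell$ is even); since no such palindrome occurs in $\omega$, every palindromic factor of $\omega$ has length at most $4$ and hence already appears in $u^2$. Thus $\Pal(\omega)=\Pal(u^2)$ has exactly $9$ elements.

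For the implication ($\Rightarrow$), suppose $\omega\in A^\nats$ contains exactly $9$ palindromes, and I slide a window of length $13$ along $\omega$. Fix $n\ge 0$ and let $w=\omega_n\cdots\omega_{n+11}$. Since $\Pal(w)\subseteq\Pal(\omega)$ we have $\#\Pal(w)\le 9$, while fact~(i) gives $\#\Pal(w)\ge 9$; hence $\#\Pal(w)=9$ and $w=u^2$ with $u$ a conjugate of a word in $[v]$. Now $\Pal(w\,\omega_{n+12})\subseteq\Pal(\omega)$ forces $\#\Pal(w\,\omega_{n+12})\le 9$, so by fact~(ii) we must have $\pi(w\,\omega_{n+12})=6$; in particular $\omega_{n+6}=\omega_n$. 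As $n\ge 0$ was arbitrary, $\omega$ is purely periodic of period $6$, say $\omega=p^\infty$ with $p=\omega_0\cdots\omega_5$. Applying fact~(i) to the prefix $\omega_0\cdots\omega_{11}=p^2$ yields $p^2=u^2$ with $u\in[v]$, whence $p=u\in[v]$ and $\omega=p^\infty$ has the required form; periodicity of period $6$ is then immediate.

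The step that needs the most care is the uniformity of the window argument: facts~(i) and~(ii) must be applied not merely to a distinguished block but to every length-$12$ (resp.\ length-$13$) factor of $\omega$, and the relevant period is in general a conjugate of $aababb$ rather than $aababb$ itself. Once one checks that conjugation does not affect the palindrome count, so that the two ``direct computation'' facts hold for the entire conjugacy class, the argument propagates periodicity from a single occurrence to all of $\omega$, and the remaining combinatorial content reduces entirely to those two finite verifications.
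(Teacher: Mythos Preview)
Your argument is correct and follows exactly the route the paper sketches: the paper states the two computational facts and then simply writes ``So we have'' before the proposition, leaving the derivation implicit; you make that derivation explicit via the sliding length-$13$ window and the observation that absence of palindromes of length $5$ and $6$ in $u^2$ forces all palindromes of $u^\infty$ to have length at most $4$. Your remark that facts~(i) and~(ii) must be read for the whole conjugacy class of $v$ (not just for the two words in $[v]$ as literally defined) is a genuine point the paper glosses over, and it is needed for the window argument to apply at every position.
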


\noindent We next characterize all binary words containing precisely $10$ palindromes. 
By direct inspection, any word over $A$ of length $14$ containing precisely $10$ palindromes belongs to one of the  following four sets:
\begin{enumerate}
 \item $T_{1}=\{w^{2} \mid w\in [av]\}$;
 \item $T_{2}=\{w^{2} \mid w\in [vb]\}$;
 \item $T_{3}=\{\alpha w^{2}\beta \mid \alpha,\beta \in A, w\in [v], \pi(\alpha w^{2})\neq 6, \pi(w^{2}\beta)=6\}$;
 \item $T_{4}=\{w^{2}\alpha \beta \mid \alpha,\beta \in A, w\in [v], \pi(w^{2}\alpha)=6, \pi(w^{2}\alpha \beta)\neq 6\}$.
\end{enumerate}

Moreover, the length of the longest palindrome in any of the words in the sets $T_{i}$ is at most $6$.

\begin{lemma}
 Let $\gamma\in A$. Then:
\begin{enumerate}
 \item if $w^{2}\in T_{1}$ and $\pi(w^{2}\gamma)\neq 7$, then $w^{2}\gamma$ contains $11$ palindromes;
 \item if $w^{2}\in T_{2}$ and $\pi(w^{2}\gamma)\neq 7$, then $w^{2}\gamma$ contains $11$ palindromes; 
 \item if $\alpha w^{2}\beta\in T_{3}$ and $\pi(w^{2}\beta\gamma)\neq 6$, then $w^{2}\beta\gamma$ contains $11$ palindromes;
  \item if $w^{2}\alpha \beta\in T_{4}$, then $w^{2}\alpha\beta\gamma$ contains $11$ palindromes.
\end{enumerate}
\end{lemma}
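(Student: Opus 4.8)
The plan is to handle all four items uniformly through the incremental principle underlying the Droubay--Justin--Pirillo bound: appending a single letter $x$ to a word $u$ raises $\#\Pal$ by at most one, and by exactly one precisely when the longest palindromic suffix of $ux$ is unioccurrent, i.e.\ does not already occur in $u$. By the discussion preceding the lemma, every word in $T_1,\dots ,T_4$ has exactly $10$ palindromes, and each item concerns the extension of such a word by the single letter $\gamma$. Consequently the extended word has \emph{at most} $11$ palindromes in every case, and this upper bound is immediate. The whole content of the lemma is therefore the matching \emph{lower} bound: one must show that appending $\gamma$ creates a genuinely new palindrome, and the stated period conditions are exactly what forces this.

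For the lower bound I would first reduce the work by symmetry. Exchanging $a$ and $b$ and taking reversals both preserve $\#\Pal$, the classes $[v]$ and $[av]=[vb]$, the sets $T_i$, and (as noted in the preliminaries) all periods; hence it suffices to fix one representative of $w$ in the relevant class and let the remaining letters range over $A$, subject to the period hypotheses. These hypotheses cut each item down to a constant number of explicit words: for instance in item~1, with $w=aaababb$, the condition $\pi(w^{2}\gamma)\neq 7$ forces $\gamma=b$ and $w^{2}\gamma=aaababbaaababbb$. For each surviving word I would then display the unioccurrent longest palindromic suffix produced by $\gamma$ and check by inspection that it occurs nowhere earlier: $bbb$ in the example just given, $babbab$ for item~3 (on the $T_3$ word $\alpha w^{2}\beta$ extended by $\gamma=b$), and, for item~4, $bbabb$ when $\gamma=b$ and $ababbaba$ when $\gamma=a$. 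Since the base word has $10$ palindromes and exactly one new one is added, the count is $11$.

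The conceptual heart, and the step I expect to be the main obstacle, is to see \emph{why} the period conditions coincide with the appearance of a new palindrome, rather than merely verifying each instance. The mechanism is the interaction between palindromic suffixes and periods: if appending $\gamma$ created no new palindrome, the longest palindromic suffix $p$ of the extended word would reoccur, and its suffix occurrence, overlapping the rigid period-$|w|$ structure inherited from the block $w^{2}$, would tile the whole word and reinstate exactly the excluded period ($7$ in items~1--2, $6$ in item~3); conversely, when the period hypothesis holds no palindromic suffix can recur, so the longest one is new. Making this precise---locating the earlier occurrence of $p$ and extracting the forbidden period from its overlap with the periodic prefix---is the crux. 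Item~4 carries no period hypothesis, and here one must verify the stronger assertion that the $T_4$ word, having already broken period $6$ through $\beta$, forces a new palindrome under \emph{both} continuations $\gamma\in\{a,b\}$; this is the single place where both letters require attention, and it is settled by the two explicit suffixes $bbabb$ and $ababbaba$ recorded above.
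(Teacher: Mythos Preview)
The paper supplies no proof of this lemma: like the surrounding facts about $T_1,\dots,T_4$, it is asserted as obtainable by direct inspection of the finitely many words in play, and your proposal is exactly such a verification, organised through the Droubay--Justin--Pirillo increment and a symmetry reduction to one representative per class; the explicit new palindromic suffixes you exhibit ($bbb$, $babbab$, $bbabb$, $ababbaba$) are all correct.

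One remark worth recording. In item~3 you read the conclusion as concerning the extension $\alpha w^{2}\beta\gamma$ of the $T_3$ word, whereas the printed statement speaks of $w^{2}\beta\gamma$. Your reading is the intended one and is what your argument actually proves: under the hypotheses $\pi(w^{2}\beta)=6$ and $\pi(w^{2}\beta\gamma)\neq 6$, the word $w^{2}\beta\gamma$ is itself a member of $T_4$ and has only $10$ palindromes, so the literal printed form of item~3 cannot hold. The version you prove is also the one needed for the subsequent proposition, so nothing is lost.
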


\noindent Thus we have:

\begin{proposition}\label{prop:10per}
An infinite word $w$ over $A$ contains exactly $10$ palindromes if and only if $w$ is of the form $u^{\infty}$ with $u\in [av]$ or $u\in [vb]$, or of the form $\alpha (u)^{\infty}$ with $u\in [v]$ and  $\alpha \in A$ such that $\alpha u$ does not have period $6.$ In the first case $w$ is  periodic of period $7,$ while in the second case $w$ is  ultimately periodic of period $6$.
\end{proposition}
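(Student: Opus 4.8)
The plan is to prove both implications, handling the converse (that the advertised words do contain exactly $10$ palindromes) as a direct verification and concentrating the real work on the forward implication. For the ``if'' direction I would compute $\Pal$ explicitly for each form. A word $u^\infty$ with $u\in[av]\cup[vb]$ is purely periodic of period $7$, so any two consecutive periods form a length-$14$ factor in $T_1\cup T_2$, which gives $\#\Pal=10$; moreover such a word has no palindromic factor of length $6$ or $7$, and since a palindrome of length $L\ge 6$ always contains a central palindromic factor of length $6$ or $7$, there is none of length $\ge 6$, so the count is exactly $10$. For a word $\alpha u^\infty$ with $u\in[v]$ and $\pi(\alpha u)\ne 6$, the tail $u^\infty$ already contributes exactly the $9$ palindromes of Proposition~\ref{prop:9per}; the hypothesis $\pi(\alpha u)\ne6$ forces $\alpha$ to disagree with the periodic continuation, which creates exactly one new short palindrome at the junction (a triple letter) and no long one, so the total is again $10$.

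For the ``only if'' direction, let $w$ be infinite with $\#\Pal(w)=10$; necessarily both letters occur, since otherwise $\#\Pal(w)\le 2$. The first step is to locate a length-$14$ factor witnessing all the palindromes. I claim $w$ has a factor of length $14$ with exactly $10$ palindromes. Indeed, every factor of $w$ has at most $10$ palindromes, while every length-$12$ factor has at least $9$; so if no length-$14$ factor had exactly $10$, then every length-$14$ factor, and hence every length-$12$ factor inside it, would have exactly $9$, and by the length-$12$ characterisation every length-$12$ factor of $w$ would equal $u^2$ with $u\in[v]$. Overlapping these windows forces $w=u^\infty$, which has only $9$ palindromes, a contradiction. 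Thus $w$ contains a factor $F$ of length $14$ with $\#\Pal(F)=10$, and therefore $F\in T_1\cup T_2\cup T_3\cup T_4$.

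The second, and main, step is to propagate the local period carried by $F$ across all of $w$ by means of the preceding lemma. Suppose $F\in T_1\cup T_2$, so that $F$ has period $7$. If the period-$7$ pattern were to break at some position to the right of $F$, then the last length-$14$ window preceding the break would be a period-$7$ block extended by a letter violating period $7$, which by parts (1)--(2) of the lemma would carry $11$ palindromes, contradicting $\#\Pal(w)=10$; hence period $7$ continues to the right forever. A reversal-symmetric argument, legitimate because $\#\Pal$ and the class $[\,\cdot\,]$ are invariant under reversal, propagates it to the left, so $w$ is purely periodic of period $7$, i.e.\ of the form $u^\infty$ with $u\in[av]\cup[vb]$. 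If instead $F\in T_3\cup T_4$, the same mechanism using parts (3)--(4) of the lemma forces the period-$6$ core to continue indefinitely to the right, while the conditions $\pi(\alpha w^2)\ne 6$ recorded in the definitions of $T_3,T_4$ isolate a single defect that can be located at, and only at, the left extremity of $w$; this yields the form $\alpha u^\infty$ with $u\in[v]$ and $\pi(\alpha u)\ne6$.

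I expect the genuine difficulty to lie in this propagation step. Applying the lemma once is immediate, but iterating it requires that each successive length-$14$ window be again of the correct type so that the lemma reapplies. Here a subtlety appears: membership in the $T_i$ is phrased up to the class $[\,\cdot\,]$ rather than up to conjugacy, yet sliding a period-$7$ window produces conjugates of $[av]$-words that need not lie in $[av]$, so one must first upgrade the lemma (equivalently, the sets $T_i$) to a conjugacy-invariant form, valid for every period-$7$, respectively period-$6$-core, block carrying $10$ palindromes. The other delicate point is to argue that in the period-$6$ case the unique admissible defect sits exactly at the start of $w$, so that no interior glitch manufactures an eleventh palindrome. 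By comparison, the existence of the length-$14$ witness above and the finite verification in the ``if'' direction are routine.
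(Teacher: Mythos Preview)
Your plan is exactly the paper's: it derives Proposition~\ref{prop:10per} from the length-$14$ classification into $T_1,\dots,T_4$ together with the extension lemma, writing nothing beyond ``Thus we have:''. You supply the missing details and, importantly, flag a genuine imprecision the paper glosses over: the $T_i$ are defined via $w\in[v]$ (reversal and letter-swap only), not via conjugates, so sliding a window --- or even reversing a $T_3$-word --- can land outside $T_1\cup\cdots\cup T_4$ while still having exactly $10$ palindromes. Your proposed fix (upgrade the classification and the lemma to a conjugacy-closed form) is the right one and is needed already for the statement of the classification, not only for the propagation step.

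One local correction: do not treat $T_3$ and $T_4$ in parallel. Part (4) of the lemma is unconditional --- \emph{every} one-letter right extension of a $T_4$-word already has $11$ palindromes --- so an infinite $w$ with $\#\Pal(w)=10$ can have no factor in $T_4$ at all; moreover the period-$6$ defect in a $T_4$-word sits on the right, not the left. With $T_4$ excluded, only $T_3$ remains for the non-period-$7$ case, and there your outline is correct: part (3) propagates period $6$ indefinitely to the right, while any letter to the left of the $T_3$-defect would create, after reversal, precisely the forbidden $T_4$-type configuration and hence an eleventh palindrome, forcing the defect to sit at position $0$.
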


Thus, every aperiodic word over $A$ contains at least $11$ palindromes. An example of a uniformly recurrent aperiodic word  containing exactly $11$ palindromes is the image of the Fibonacci word $F$ under the morphism $\psi: a\mapsto a, b\mapsto abbab$,
 \[\psi(F)=aabbabaaabbabaabbabaaabbabaaabbab\, \cdots\]
 The palindromes in $\psi(F)$ are: $\epsilon$, $a$, $b$, $aa$, $bb$, $aaa$, $aba$, $bab$, $abba$, $baab$ and $baaab$.
Note that the word $\psi(F)$ is not closed under reversal, since, for example, it does not contain the reversal of its factor $abaaa$.

Berstel et al.~\cite{Ber} exhibited an aperiodic uniformly recurrent word closed under reversal and containing precisely $17$ palindromes. It is the word obtained from the paperfolding word $P$ by applying the morphism $\tau:a \mapsto ab, b\mapsto ba$: \[\tau(P)=ababbaababbabaabababbabaabbabaababab\cdots\] 

In the next section, we show that the least number of palindromes which can occur in an infinite binary word closed under reversal is $13$.

Rather than limiting the total number of palindromic factors, we consider the problem of limiting the length of the longest palindromic factor. In this case we have:

\begin{proposition}\label{5&6} Every infinite binary word contains a palindromic factor of length  greater than $3.$ There exist infinite binary words containing no palindromic factor of length greater than $4$,  but every such word is ultimately periodic.  There exists a uniformly recurrent aperiodic binary word  (closed under reversal) whose longest palindromic factor has length $5.$\end{proposition}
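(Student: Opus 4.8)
The plan is to treat the three assertions separately; in each I pass to the decomposition of the word into maximal runs of equal letters and read off a short palindrome from a local pattern of run lengths. For the first assertion, observe that every palindrome of length $\ge 4$ contains a palindrome of length $4$ or $5$ as its central factor (remove the first and last letter repeatedly), so it suffices to rule out an infinite binary word avoiding all palindromes of lengths $4$ and $5$. Suppose $\omega$ were such a word and write it as a sequence of maximal runs. Avoiding $aaaa$ and $bbbb$ bounds every run length by $3$; avoiding $abba$ and $baab$ rules out internal runs of length $2$; and avoiding $abbba$ and $baaab$ rules out internal runs of length $3$. Hence every run past the first has length $1$, so $\omega$ is eventually alternating and contains $ababa$, a palindrome of length $5$ — a contradiction.

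For the second assertion I first exhibit a witness: by Proposition~\ref{prop:9per} the periodic word $v^\infty$ with $v=aababb$ satisfies $\Pal(v^\infty)=\{\varepsilon,a,b,aa,bb,aba,bab,abba,baab\}$, whose longest element has length $4$. For the converse, suppose $\omega$ has no palindrome of length $>4$; in particular it avoids all palindromes of lengths $5$ and $6$. The run analysis now gives internal runs of length $1$ or $2$, and each of the four palindromic run-triples $(1,1,1),(1,2,1),(2,1,2),(2,2,2)$ forces a palindrome of length $5$ or $6$ (respectively $ababa,abaaba,aabaa,aabbaa$, once the obligatory flanking letters are adjoined). Writing $s=s_1s_2\cdots$ for the run-length sequence, this says $s_i\ne s_{i+2}$ for every internal $i$; consequently the odd- and even-indexed subsequences of $s$ each alternate between $1$ and $2$, so $s$ is eventually periodic of period $4$ and $\omega$ is ultimately periodic.

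The third assertion asks for a single witness, so the task is construction followed by a finite verification. The excerpt already exhibits the uniformly recurrent aperiodic word $\psi(F)$, where $\psi\colon a\mapsto a,\ b\mapsto abbab$, whose palindromes are exactly $\varepsilon,a,b,aa,bb,aaa,aba,bab,abba,baab,baaab$; its longest palindrome $baaab$ has length $5$. The only property it lacks is closure under reversal: in $\psi(F)$ the block $aaa$ is always followed by $bb$ and never by $ba$, so $aaaba=\widetilde{abaaa}$ never occurs. My plan is to build a reversal-closed companion with the same palindrome bound. Let $L=\Fac(\psi(F))\cup\widetilde{\Fac(\psi(F))}$; this language is factorial, biextendable and closed under reversal, and because reversal fixes every palindrome, any palindrome lying in $L$ is already a palindrome of $\psi(F)$ and hence has length at most $5$. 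It therefore suffices to find a uniformly recurrent aperiodic word $\omega$ with $\Fac(\omega)\subseteq L$, with $\Fac(\omega)$ closed under reversal, and with $baaab\in\Fac(\omega)$: such an $\omega$ is binary, uniformly recurrent, aperiodic, closed under reversal, and has longest palindrome exactly $5$.

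The hard part is producing such an $\omega$. The subshift generated by $L$ is reversal-invariant, but a minimal subshift inside it need not itself be reversal-invariant, need not be aperiodic, and need not retain the factor $baaab$; all three must be secured simultaneously. I expect the cleanest route is to exhibit a concrete reversal-invariant minimal subshift inside $L$ — for instance as the fixed point of a primitive substitution designed to commute with reversal up to conjugation and to keep a run of three equal letters in its language — and then to confirm the palindrome bound directly. That last confirmation is only a finite check, since a palindrome of length $6$ or $7$ is visible inside a window of bounded width; but closure under reversal generically tends to manufacture long palindromes (a morphic image of a Sturmian word under a morphism with palindromic letter-images would inherit palindromes of every length), so the substitution must be chosen with care. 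Verifying that it avoids every length-$6$ and length-$7$ palindrome while remaining aperiodic, uniformly recurrent and reversal-closed is the delicate heart of the argument.
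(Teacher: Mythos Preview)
Your treatments of the first two assertions are correct and take a genuinely different route from the paper. The paper argues both by ad hoc case analysis of short extensions (right extensions of $aa$ for the first part, first returns to $aab$ for the second), whereas you pass systematically to the run-length sequence and read off forbidden palindromes from local run patterns. Your approach is cleaner and more uniform: the reduction ``palindrome of length $\ge 4$ $\Leftrightarrow$ palindrome of length $4$ or $5$'' together with the run bookkeeping makes both parts fall out with no branching. The paper's method, by contrast, stays closer to the letters and yields slightly more explicit information (e.g.\ the exact period-$6$ tails) at the cost of more cases.

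The third assertion, however, is not proved. You correctly observe that $\psi(F)$ fails only closure under reversal and set up the symmetrised language $L=\Fac(\psi(F))\cup\widetilde{\Fac(\psi(F))}$, but from there you only sketch a wish list: find a reversal-invariant minimal subshift inside $L$ that is aperiodic and still contains $baaab$. You yourself flag this as ``the hard part'' and never carry it out. Worse, the plan as stated is fragile: since $\psi(F)$ is uniformly recurrent, its orbit closure $X$ is minimal, and because $\psi(F)$ is \emph{not} closed under reversal, $X\neq\tilde X$; so the two obvious minimal subshifts inside the subshift with language $L$ are $X$ and $\tilde X$, neither reversal-invariant. There is no general principle guaranteeing a third, reversal-invariant minimal component, and you give no candidate substitution or any verification. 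By contrast, the paper simply writes down an explicit recursive word ($U_0=aabb$, $U_{2n+1}=U_{2n}\,ab\,\tilde U_{2n}$, $U_{2n+2}=U_{2n+1}\,ba\,\tilde U_{2n+1}$, $\omega=\lim U_n$), checks that $\Pal(U_2)$ has longest element of length $5$, and proves $\Pal(U_{n+1})=\Pal(U_n)$ by a short overlap argument. This is exactly the ``concrete construction plus finite check'' you anticipate, but it must actually be exhibited; as written, your third part is a plan, not a proof.
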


\begin{proof} Let $\omega \in \{a,b\}^\nats.$ We will show that $\omega$ contains a palindromic factor of length greater than $3$. Let $T$ denote the shift map, that is, $T\omega$ is the word whose $i$-th letter is $\omega_{i+1}$. If $aaa\in \Fac(T\omega),$ then 
\[\{aaaa, baaab\}\cap \Fac(\omega)\neq \emptyset.\] 
Thus we can assume that neither $aaa$ nor $bbb$ is a factor of $T\omega.$ If neither $aa$ nor $bb$ occurs in $T\omega,$ then $T\omega=(ab)^\infty$ or $T\omega=(ba)^\infty.$  In either case $T\omega$ contains the palindrome $ababa.$ Thus, without loss of generality we can assume that  $aa$ occurs in $T\omega.$ If we now consider all the possible right extensions of $aa$ which avoid $aaa$ and $bbb$, we find that each terminates in a palindrome of length $4$ or $5$:
\[
\begin{cases}
aabaa\\
aababa\\
aababba\\
aabba
\end{cases}\]

Next, suppose $\omega$ contains no palindromic factor of length greater than $4.$ We will show that $\omega$ is ultimately periodic and give an example of such a word. If $T\omega$ contains $aaaa,$ then $baaaa \in \Fac(\omega)$  which implies that
\[\{aaaaa, baaaab\}\cap \Fac(\omega)\neq \emptyset,\]
a contradiction. Thus we can assume that $aaaa \notin \Fac(T\omega).$ If $aaa\in \Fac(T^2\omega),$ then $baaa\in  \Fac(T\omega)$, which implies that $baaab\in \Fac(\omega),$ a contradiction. Thus neither $aaa$ nor $bbb$ occurs in $T^2\omega.$ If neither $aa$ nor $bb$ occurs in $T^2\omega$, we would have that $T^2\omega=(ab)^\infty$ or $T^2\omega=(ba)^\infty,$ a contradiction since each contains $ababa.$ Thus, without loss of generality we can assume that $aab$ occurs in $T^2\omega.$ It is readily verified that the only possible first returns to $aab$ in $\omega$ are 
\[
\begin{cases}
aababb\\
aabbab\\
\end{cases}\]
If $aabbab$ occurs in $\omega,$ then $\omega$ has a tail of the form $(aabbab)^\infty.$ If $aabbab$ does not occur in $\omega,$ the $\omega$ has a tail of the form $(aababb)^\infty.$ In either case $\omega$ is ultimately periodic.
It is readily verified that $(aabbab)^\infty$ has no palindromic factor of length greater than $4.$ 

Finally, we show the existence of a uniformly recurrent aperiodic binary word $\omega$  (closed under reversal) whose longest palindromic factor has length $5.$
Set $U_0=aabb$ and for $n\geq 0,$
\[
\begin{cases}
U_{2n+1}=U_{2n} ab \tilde{U}_{2n};\\

U_{2n}=U_{2n-1} ba \tilde{U}_{2n-1}.
\end{cases}\] 
Then $U_n$ is a prefix of $U_{n+1}$  for each $n\geq 0$ and we set
\[\omega = \lim_{n\rightarrow \infty}U_n.\]
Then, by construction, $\omega$ is closed under reversal and is uniformly recurrent (in fact, the recursive definition of $\omega$ shows that each prefix $U_n$ occurs in $\omega$ with bounded gap).
Now, a straightforward verification shows that
\[\Pal(U_2)=\{\varepsilon, a, b, aa, bb, aaa, aba, bab, bbb, abba, baab, aabaa, abbba, baaab, bbabb\}.\]
We note that $\Pal(\tilde{U}_n)=\Pal(U_n)$ for each $n\geq 0.$
We prove by induction on $n\geq 2$ that no other palindrome occurs in $U_n.$ From the above equality, we have that the result holds for $n=2.$ Now, suppose that $n\geq 2$ and  $\#\Pal(U_n)=15.$ We will show that  $\#\Pal(U_{n+1})=15.$  For $n\geq 2, $ we can write $U_n=U_1t_n\tilde{U_1}$ and 
\[U_{n+1}=\begin{cases}
U_1t_n\tilde{U_1}abU_1\tilde{t}_n\tilde{U}_1&\text{for $n$ even;}\\
U_1t_n\tilde{U_1}baU_1\tilde{t}_n\tilde{U}_1&\text{for $n$ odd.}
\end{cases}\]
Considering that $|U_1|=10,$ if $U_{n+1}$ contained a palindrome $v$ of length $6$ or $7,$ then either $v$ would be contained in $U_n$ or in $U_2.$ Thus, $\#\Pal(U_{n+1})=15.$
Finally, by Lemma~\ref{recurrent}, we deduce that $\omega$ is aperiodic. 
\end{proof}

\section{The case of binary words closed under reversal}

In this section we will prove the following:

\begin{theorem}\label{13} Let $A=\{a,b\},$ and let $A_{\textrm{cl}}^\nats$ (respectively, $A_{\textrm{ab/cl}}^\nats)$ denote the set of all infinite words in $A^\nats$ closed under reversal (respectively, the set of all aperiodic words in $A^\nats$ closed under reversal). Then $\MP(A_{\textrm{cl}}^\nats)=\MP(A_{\textrm{ab/cl}}^\nats)=13.$
\end{theorem}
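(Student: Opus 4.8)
The plan is to prove $\MP(A_{\textrm{cl}}^\nats)=\MP(A_{\textrm{ab/cl}}^\nats)=13$ by establishing a lower bound of $13$ for all binary words closed under reversal, and then exhibiting an aperiodic uniformly recurrent closed-under-reversal word achieving exactly $13$. Since aperiodic closed words form a subset of all closed words, and since an achieving example can be taken aperiodic, the two equalities will follow simultaneously: the lower bound $\#\Pal(\omega)\ge 13$ holds for every $\omega\in A_{\textrm{cl}}^\nats\supseteq A_{\textrm{ab/cl}}^\nats$, so $\MP(A_{\textrm{cl}}^\nats)\ge 13$ and a fortiori $\MP(A_{\textrm{ab/cl}}^\nats)\ge 13$; the single witness furnishes the matching upper bound $\le 13$ for both sets at once.

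For the lower bound, I would leverage the classification already built up in the preceding sections. From Proposition~\ref{prop:9per} the only words with exactly $9$ palindromes are $u^\infty$, $u\in[v]$, $v=aababb$; from Proposition~\ref{prop:10per} the words with exactly $10$ palindromes are the periodic words of period $7$ (tails $[av]$ or $[vb]$) and the ultimately periodic words of period $6$. The key observation is that \emph{each of these candidate words fails to be closed under reversal}. For instance, one checks that $v=aababb$ yields $u^\infty$ containing $aab$ but not its reversal $baa$ in a balanced way, so $u^\infty$ is not closed; similarly the period-$7$ and ultimately-periodic period-$6$ families are not closed under reversal, as can be verified by producing, in each isomorphism class $[v]$, an explicit factor whose reversal is absent. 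Thus no closed word has $9$ or $10$ palindromes. To push the bound up to $13$, I would continue this analysis through the counts $11$ and $12$: I expect to show (paralleling the $T_i$ classification and the associated first-return arguments) that the finitely many structural types of words attaining $11$ or $12$ palindromes are all either periodic/ultimately periodic or, when closed under reversal, forced to contain an extra palindrome—driving the minimum for closed words strictly above $12$.

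For the upper bound, the natural candidate is the word $\omega=\lim_n U_n$ constructed in the proof of Proposition~\ref{5&6}, whose longest palindrome has length $5$ and which is closed under reversal, uniformly recurrent, and aperiodic. In that proof it was shown that
\[\Pal(\omega)=\{\varepsilon, a, b, aa, bb, aaa, aba, bab, bbb, abba, baab, aabaa, abbba, baaab, bbabb\},\]
giving $\#\Pal(\omega)=15$, which is too large. So for the present theorem I would either exhibit a different closed construction realizing exactly $13$, or modify the recursion (changing the connecting factors $ab,ba$ or the seed $U_0$) so as to suppress exactly two of the length-$5$ palindromes while preserving closure under reversal and aperiodicity. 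Concretely, I would search among antipalindromic-style constructions $U_{n+1}=U_n\,s\,\tilde U_n$ for a seed and separator forcing $\Pal(\omega)$ to consist of precisely thirteen elements, verify closure under reversal from the $\tilde U_n$ symmetry of the recursion, and verify aperiodicity via the recurrence-based criterion (Lemma~\ref{recurrent}) exactly as in Proposition~\ref{5&6}.

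The main obstacle I anticipate is the lower-bound step for the counts $11$ and $12$: unlike the clean classifications for $9$ and $10$, here the number of structural cases grows and one must rule out closure under reversal for each family while simultaneously controlling the palindrome count. The delicate point is that a closed word can still have few palindromes (the Berstel et al.\ example has only $17$), so closure alone does not trivially inflate the count; the argument must instead show that \emph{every} binary closed word with at most $12$ palindromes is forced into one of the already-classified periodic or ultimately periodic shapes, each of which is then seen to violate closure. Getting this case analysis both exhaustive and correct—especially matching the short palindromes that a reversal-closed word is compelled to carry against the budget of $12$—will be where the real work lies, and where I would expect to rely most heavily on first-return and bispecial-factor arguments.
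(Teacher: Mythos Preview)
Your overall shape---prove $\#\Pal(\omega)\ge 13$ for every closed $\omega$, then exhibit an aperiodic closed witness with exactly $13$---is right, and your observation that the equality $\MP(A_{\textrm{cl}}^\nats)=\MP(A_{\textrm{ab/cl}}^\nats)$ follows once one knows (Lemma~\ref{recurrent}) that a periodic closed word has infinitely many palindromes is exactly how the paper handles that point. In fact that lemma already disposes of the counts $9$ and $10$ in one stroke: those words are (ultimately) periodic, hence cannot be closed under reversal with finitely many palindromes. (Incidentally, your specific check is off: $(aababb)^\infty$ \emph{does} contain $baa$; the failure of closure shows up only at length~$6$.)

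Where your proposal has a genuine gap is the lower bound beyond $10$. You plan to push the structural classification of Propositions~\ref{prop:9per}--\ref{prop:10per} through the counts $11$ and $12$ and then rule out closure in each resulting family. The paper does \emph{not} do this, and for good reason: no such finite classification is available at $11$ or $12$ (there are already aperiodic words with $11$ palindromes), so one cannot simply list types and check closure. Instead the paper works directly with a hypothetical \emph{minimiser} $\omega\in\Omega$ and proves a key structural lemma you are missing (Lemma~\ref{bab}): if $a^k\in\Fac(\omega)$ then $ba^kb\in\Fac(\omega)$, by a run-shortening construction that would otherwise produce a closed word with strictly fewer palindromes. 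This minimality trick immediately forces the nine palindromes $\{\varepsilon,a,b,aa,bb,aba,bab,abba,baab\}$ into $\Pal(\omega)$ (after a computer check, Lemma~\ref{asquare}, that both $aa$ and $bb$ must occur), and the remaining work is a case split on which palindrome of length $5$ or $6$ appears, using first returns to squeeze out a thirteenth palindrome in every branch. Without Lemma~\ref{bab} your case analysis has no anchor: you cannot even assume $abba$ and $baab$ are present, and the ``finitely many structural types'' you hope to enumerate do not exist.

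For the upper bound your instinct is correct but the search has to be done: the paper uses the seed $U_0=abaabbabaaabbaaba$ with connectors $bbaa$ and $aabb$ (alternating by parity), and verifies $\#\Pal(U_2)=13$ with longest palindrome of length~$6$; the induction then goes exactly as in Proposition~\ref{5&6}. Your idea of tweaking the $ab/ba$ recursion from that proposition is the right family, but note the successful example needs a longer seed and length-$4$ connectors to kill two of the length-$5$ palindromes.
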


Note that since there exist aperiodic binary words closed under reversal containing a finite number of palindromic factors (see for instance \cite{Ber} or Lemma~\ref{finite}), we have that \[\MP(A_{\textrm{cl}}^\nats)\leq \MP(A_{\textrm{ab/cl}}^\nats)<+\infty.\]

Our proof will involve some intermediate lemmas and a case-by-case analysis.
We begin with some general remarks concerning words closed under reversal. The following lemma is probably well known but we include it here for the sake of completeness:

\begin{lemma}\label{recurrent} Suppose   $\omega \in A^\nats$ is closed under reversal. Then $\omega$ is recurrent. Hence $\omega$ is either aperiodic or (purely) periodic. In the latter case $\#\Pal(w)=+\infty.$
\end{lemma}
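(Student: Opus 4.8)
The plan is to prove the three assertions in sequence, as they build on one another. First I would establish recurrence. Suppose $\omega$ is closed under reversal and let $u$ be any factor; I need to show $u$ occurs infinitely often. The idea is to look at the prefix behavior of $\omega$. If some factor $u$ occurred only finitely many times, then there is a last occurrence, and beyond that point $u$ never reappears. The key observation is that closure under reversal forces a symmetry: whenever a factor $v$ appears, so does $\tilde v$. I would argue that if $\omega$ has a factor $u$ appearing finitely often, then by considering longer and longer prefixes $p_n$ of $\omega$ containing all occurrences of $u$, the reversals $\tilde p_n$ must also be factors; but the $\tilde p_n$ all \emph{end} in $\tilde u$ (equivalently contain $u$ read backwards near their end), and since $\{\tilde p_n\}$ grows without bound while remaining factors of the single word $\omega$, one derives that $u$ (or its reversal, which by palindrome-free reasoning we can relate back to $u$) recurs infinitely often, a contradiction. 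The cleanest formulation I would aim for: since each prefix reversed is again a factor, every factor that occurs at least once occurs arbitrarily far out, giving recurrence.

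Second, I would derive the dichotomy \emph{aperiodic or purely periodic}. This is a general fact: a recurrent word cannot be ultimately periodic without being purely periodic. If $\omega$ is ultimately periodic, say $\omega_{i+p}=\omega_i$ for all $i\ge N$ with period block $z$, then $z$ occurs infinitely often by recurrence, and in particular $z$ occurs at some position \emph{before} $N$. I would then bootstrap this occurrence leftward: recurrence guarantees $z$ appears with the earlier occurrences forcing the periodic pattern to extend back toward the start, and iterating (or using that a recurrent word has every factor occurring infinitely often, hence the period propagates to index $0$) shows $\omega$ is purely periodic from the beginning. So ultimate periodicity collapses to pure periodicity.

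Third, for the purely periodic case I would show $\#\Pal(\omega)=+\infty$. If $\omega=t^\infty$ for some finite word $t$, then $\omega$ contains arbitrarily long factors, and closure under reversal (inherited: $\omega$ closed under reversal means every factor's reversal is a factor) combined with periodicity yields arbitrarily long palindromes. Concretely, I would argue that for a closed-under-reversal periodic word, some central factor can be extended symmetrically: since the word is periodic and closed under reversal, there is a position about which the word is a palindrome, and the periodicity then produces palindromic factors of unbounded length. An alternative and perhaps cleaner route is to note that a purely periodic word closed under reversal must have its period word $t$ be (a conjugate of) a product of two palindromes, and such words contain palindromes of every sufficiently large length, so $\Pal(\omega)$ is infinite.

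The main obstacle I expect is the recurrence step, specifically making the reversal-symmetry argument airtight: one must be careful that closure under reversal is stated for \emph{factors} (two-sided finite pieces) rather than for prefixes, so I cannot directly reverse a prefix and read it off as a suffix. The work will be in showing that finitely-many occurrences of a factor $u$ genuinely contradicts the existence of reversals of arbitrarily long factors straddling the last occurrence of $u$. The periodicity-collapse and the infinitude-of-palindromes steps are comparatively standard, the latter reducing to the observation that periodicity gives unbounded factors and reversal-closure forces symmetric (hence palindromic) extensions.
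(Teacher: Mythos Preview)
Your outline has the right three steps, but the first and third are not yet proofs, and your difficulty in the first step is self-inflicted.

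For recurrence, your concern that ``closure under reversal is stated for \emph{factors} rather than for prefixes'' is misplaced: a prefix \emph{is} a factor, so its reversal is automatically a factor of $\omega$. The paper exploits exactly this and gets a one-line argument. Given any prefix $u$ of $\omega$, its reversal $\tilde u$ is a factor; since $\omega$ is infinite, $\tilde u$ is followed in $\omega$ by some letter $\alpha$, so $\tilde u\alpha\in\Fac(\omega)$, whence $\alpha u\in\Fac(\omega)$. Thus $u$ occurs at some position $\ge 1$, i.e.\ at least twice, and recurrence follows. Your proposed detour via long prefixes $p_n$ and their reversals $\tilde p_n$ hits the very snag you flag: $\tilde p_n$ ends in $\tilde u$, not in $u$, so you still have to convert information about $\tilde u$ back to $u$---and the clean way to do that is precisely the one-liner above.

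Your sketch for the periodic/aperiodic dichotomy is fine and matches the paper's: a recurrent ultimately periodic word has its long prefixes reoccurring inside the periodic tail, which forces the period back to position $0$.

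For the purely periodic case, neither of your suggestions (``a position about which the word is a palindrome'', ``$t$ is a conjugate of a product of two palindromes'') is yet an argument; both are true but need proof. The paper gives a direct construction that avoids this: if $\omega=u^\infty$ and $n|u|\ge M$, closure under reversal puts $\tilde u^{\,n}$ in $\Fac(\omega)$, hence $\tilde u^{\,n}$ sits inside $u^{n+1}$. Take $v$ to be the prefix of $u^{n+1}$ ending at the last letter of that occurrence of $\tilde u^{\,n}$. Then $n|u|\le |v|<(n+1)|u|$, $v$ has period $|u|$, begins with $u^n$ and ends with $\tilde u^{\,n}$; the reversal $\tilde v$ has the same length, the same period $|u|$, and the same prefix $u^n$ of length $\ge|u|$, so $v=\tilde v$ is a palindrome of length at least $M$.
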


\begin{proof} Let $u$ be a prefix of $\omega.$  Then $\tilde{u}$ occurs in $\omega$ followed by some letter $\alpha\in A.$ Then $\alpha u$ is a factor of $\omega$, which means that $u$ occurs at least twice in $\omega.$ This proves that $\omega$ is recurrent.
If $\omega$ is ultimately periodic, meaning $\omega=vu^\infty$ for some $u$ and $v,$ then as $\omega$ is recurrent it follows that $u$ is a factor of $v^\infty$, which implies that $\omega$ is purely periodic.  Finally, it remains to prove that if $\omega=u^\infty$ for some factor $u$ of $\omega,$ then $\omega$ contains an infinite number of palindromic factors, or equivalently that for each $M$ there exists a palindromic factor $v$ of $\omega$ with $|v|\geq M.$ So let $M$ be given and pick a positive integer $n$ such that $n|u|\geq M.$ Since $u^n$ is a factor of $\omega$ it follows that  $\tilde{u}^n$ is a factor of $\omega.$ This implies that $\tilde{u}^n$ occurs in $u^{n+1}.$ Thus there exists a factor $v$ of $\omega$ with $n|u|\leq |v|<(n+1)|u|$ which begins in $u^n$ and ends in $\tilde{u}^n.$ Hence $v$ is a palindrome. 
\end{proof}

\begin{corollary} $\MP(A_{\textrm{cl}}^\nats)=\MP(A_{\textrm{ab/cl}}^\nats).$
\end{corollary}

\begin{proof} We already noticed that $\MP(A_{\textrm{cl}}^\nats) \leq \MP(A_{\textrm{ab/cl}}^\nats)<+\infty.$ Let $\omega \in A_{\textrm{cl}}^\nats$ with $\#\Pal(\omega)=\MP(A_{\textrm{cl}}^\nats).$ Since $\MP(A_{\textrm{cl}}^\nats)<+\infty ,$ it follows from Lemma~\ref{recurrent} that $\omega$ is aperiodic and hence $\omega \in A_{\textrm{ab/cl}}^\nats, $ whence $\MP(A_{\textrm{ab/cl}}^\nats)\leq \#\Pal(\omega)=\MP(A_{\textrm{cl}}^\nats).$
\end{proof}

 We begin by showing that $13$ is an upper bound for $\MP(A_{\textrm{cl}}^\nats).$

\begin{lemma}\label{finite} $\MP(A_{\textrm{cl}}^\nats)\leq 13.$
\end{lemma}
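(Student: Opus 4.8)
The plan is to establish this upper bound by exhibiting a single infinite binary word $\omega$ that is closed under reversal and satisfies $\#\Pal(\omega)\le 13$. Following the pattern of the earlier length-$5$ construction, I would define $\omega$ as the limit of a sequence $(U_n)_{n\ge 0}$ given by a recursion of the form $U_{n+1}=U_nX_n\tilde{U}_n$, where $U_0$ is a short seed and $X_n$ is a connecting block of length one or two, chosen so that $U_n$ is always a proper prefix of $U_{n+1}$. Because each $U_{n+1}$ has $U_n$ as a prefix and $\tilde{U}_n$ as a suffix, the limit $\omega=\lim_n U_n$ is automatically closed under reversal.

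The first concrete step is to pin down $U_0$ and the blocks $X_n$ and to verify the prefix property, so that $\omega$ is well defined. I would then compute $\Pal(U_N)$ explicitly for a small index $N$, chosen large enough that the junction structure has stabilized, and check that it consists of exactly thirteen palindromes of bounded length; this finite set $P$ with $\#P=13$ is the base case of an induction. It is convenient to record at the outset that $\Pal(\tilde{U}_n)=\Pal(U_n)$ for every $n$, since reversal carries palindromes to palindromes.

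The core of the argument is the inductive claim $\Pal(U_{n+1})=\Pal(U_n)=P$ for all large $n$. Writing $U_{n+1}=U_nX_n\tilde{U}_n$ and using that every palindrome produced so far has length at most some fixed bound $L$ (with $L$ small compared to $|U_n|$), I would argue that any palindromic factor $v$ of $U_{n+1}$ is localized: either $v$ sits inside the prefix copy $U_n$, or inside the suffix copy $\tilde{U}_n$, or $v$ meets the connecting block $X_n$ and therefore lies within a bounded window around $X_n$. The first two cases are closed off by the induction hypothesis together with $\Pal(\tilde{U}_n)=\Pal(U_n)$, while the window around $X_n$ is, for $n$ large, a fixed finite word independent of $n$ that one inspects directly to see it yields no palindrome outside $P$. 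The main obstacle is exactly this seam analysis: one must rule out \emph{every} palindrome straddling the junction that would fall outside the thirteen, and in particular show that no arbitrarily long palindrome can nucleate across $X_n$ as $n$ grows, which is why maintaining an a priori bound $L$ on palindrome length (in the spirit of the earlier length-$5$ construction) is the linchpin of the whole estimate.

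Finally, since $\#\Pal(\omega)\le 13$ is finite and $\omega$ is closed under reversal, Lemma~\ref{recurrent} shows that $\omega$ must be aperiodic, so $\omega$ is a genuine element of $A_{\textrm{cl}}^\nats$ witnessing $\MP(A_{\textrm{cl}}^\nats)\le 13$, as claimed.
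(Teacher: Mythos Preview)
Your plan is essentially identical to the paper's proof: the paper takes $U_0=abaabbabaaabbaaba$, sets $U_{2n+1}=U_{2n}\,bbaa\,\tilde{U}_{2n}$ and $U_{2n+2}=U_{2n+1}\,aabb\,\tilde{U}_{2n+1}$, checks that $\Pal(U_2)$ consists of exactly $13$ palindromes (all of length $\le 6$), and then runs precisely your seam-localization induction by writing $U_n=U_1t_n\tilde{U}_1$ so that any short palindrome in $U_{n+1}$ already lies in $U_n$ or in $U_2$. The only cosmetic discrepancy is that the connecting blocks have length $4$ rather than $1$ or $2$, but otherwise your outline matches the paper's argument step for step.
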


\begin{proof} Set $U_0=abaabbabaaabbaaba.$ For $n\geq 0$ define
\[
\begin{cases}U_{2n+1}=U_{2n} \,bbaa \,\tilde{U}_{2n};\\ 
U_{2n+2}=U_{2n+1}\,aabb\,\tilde{U}_{2n+1}.
\end{cases}\] 
The first few values of $U_n$ are as follows: 
\begin{align*}
U_1&=abaabbabaaabbaaba \,bbaa\, abaabbaaababbaaba\\
U_2&=abaabbabaaabbaababbaaabaabbaaababbaaba \,aabb\, abaabbabaaabbaabaaabbabaabbaaababbaaba.
\end{align*}
We note that $U_n$ is a prefix of $U_{n+1}$ for each $n\geq 0.$ 
Let
\[\omega=\lim_{n\rightarrow +\infty} U_n.\]
It is clear by construction that $\omega$ is closed under reversal. It is readily verified that 
\[\Pal(U_2)=\{\epsilon, a, aa, aaa, aabaa, aabbaa, aba, abba, b, baaab, baab, bab, bb\}.\]
Hence $\#\Pal(U_2)=13.$

We prove by induction on $n\geq 2$ that  $\#\Pal(U_n)=13.$ The above equality shows that this is true for $n=2$. For every $n\ge 2$, we can write $U_n=U_1t_n\tilde{U_1}$ and 
\[U_{n+1}=\begin{cases}
U_1t_n\tilde{U_1}bbaaU_1\tilde{t}_n\tilde{U}_1&\text{for $n$ even;}\\
U_1t_n\tilde{U_1}aabbU_1\tilde{t}_n\tilde{U}_1&\text{for $n$ odd.}
\end{cases}\]
Since $|U_{1}|=38,$ a palindrome of length smaller than or equal to $8$ which occurs in $U_{n+1}$ must either occur in $U_{n}$ or in $U_2.$ The result now follows from the induction hypothesis.
\end{proof}

Set
\[\Omega =\{\omega \in \{a,b\}\mid \omega\,\mbox{is closed under reversal and}\,\#\Pal(\omega)=\MP(A_{\textrm{cl}}^\nats)\}.\]
For $\omega \in \Omega$ and $x\in \{a,b\}$ we put
\[N_x(\omega)=\textrm{max}\{k\mid x^k\,\mbox{is a factor of}\,\omega\}.\]
Since $\MP(A_{\textrm{cl}}^\nats) <+\infty$ both $N_a$ and $N_b$ are finite.

\begin{proof}[Proof of Theorem~\ref{13}] Fix $\omega \in \Omega.$  We must show that $\#\Pal(\omega)\geq 13.$  We will make use of the following lemma:

\begin{lemma}\label{bab} Let $k$ be a positive integer. If $a^k$ (respectively $b^k$) is a factor of $\omega,$ then so is $ba^kb$ (respectively $ab^ka$).
\end{lemma}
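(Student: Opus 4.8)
The plan is to exploit the hypothesis that $\omega$ is closed under reversal together with the finiteness of $N_a$ and $N_b$. The key point is that the maximality in the definition of $N_a$ forces $a^{N_a}$ to appear in a context flanked by the other letter. I would argue as follows for the factor $a^k$ (the case $b^k$ being symmetric by exchanging the roles of $a$ and $b$). Since $a^k$ is a factor of $\omega$ and $\omega$ is recurrent (by Lemma~\ref{recurrent}), $a^k$ occurs infinitely often; I want to produce an occurrence with the letter $b$ immediately to its left and an occurrence with $b$ immediately to its right, and then combine these using closure under reversal to get $ba^kb$ as a single factor.

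First I would establish that $a^k$ occurs in $\omega$ preceded by $b$ and also occurs followed by $b$. Consider any occurrence of $a^k$; it extends to the left either by $b$ or by $a$, and to the right either by $b$ or by $a$. If $a^k$ were always preceded by $a$ whenever it can be extended left, then (using recurrence and the fact that $\omega$ is one-sided infinite so occurrences have left-context except possibly a prefix occurrence) one would build arbitrarily long blocks of $a$'s, contradicting the finiteness of $N_a$; hence some occurrence of $a^k$ is immediately preceded by $b$, i.e.\ $ba^k \in \Fac(\omega)$. The same argument on the right gives $a^kb \in \Fac(\omega)$. Now, since $\omega$ is closed under reversal, from $ba^k \in \Fac(\omega)$ we also get $\widetilde{ba^k} = a^k b \in \Fac(\omega)$, which is automatic, so this alone does not yet glue the two $b$'s onto a single block.

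The decisive step is to upgrade ``$a^k$ preceded by $b$'' and ``$a^k$ followed by $b$'' to the two-sided factor $ba^kb$. Here I would use maximality of $k$ more carefully: take $k = N_a$ so that $a^{k+1}$ is \emph{not} a factor. Then every occurrence of $a^k$ as a maximal block is necessarily flanked on both sides by $b$ (an occurrence that is not a prefix and not extendable by $a$ must have $b$ on the left, and not extendable by $a$ on the right forces $b$ there too), giving $ba^{N_a}b \in \Fac(\omega)$ directly. To descend to arbitrary $k \le N_a$, I would observe that $ba^{N_a}b$ contains $ba^{k}b$ whenever... (this needs care, since $ba^{N_a}b$ contains $a^k$ only flanked by $a$ on one side for $k<N_a$). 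Instead, for general $k \le N_a$ I would invoke closure under reversal together with the left/right extension dichotomy above: having $ba^k$ and $a^kb$ as factors, I would locate within $\omega$ a first return to $a^k$ or use that the set of complete first returns to $a^k$ is finite and closed under reversal, forcing a return word whose flanking letters are both $b$.

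The main obstacle I anticipate is precisely this last gluing for $k < N_a$: knowing separately that $b a^k$ and $a^k b$ occur does not immediately yield $b a^k b$, since the two occurrences may be different and differently flanked. The cleanest route is probably to reduce everything to the maximal-block case $k=N_a$, proving $b a^{N_a} b \in \Fac(\omega)$ outright, and then to argue that for smaller $k$ the desired $ba^kb$ is obtained by considering a \emph{shortest} complete first return to $a^k$ and showing, via closure under reversal and finiteness of $N_a$, that it cannot be padded by extra $a$'s on either side. I would keep the symmetric statement for $b^k$ in parallel throughout, since the two cases are identical under the letter-exchange symmetry.
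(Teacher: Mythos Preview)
Your approach has a fundamental gap: you are trying to derive the conclusion using only that $\omega$ is closed under reversal (hence recurrent) and that $N_a,N_b<\infty$, but the lemma is \emph{false} under those hypotheses alone. For instance, the periodic word $(aab)^\infty$ is closed under reversal, the letter $a$ is a factor, yet $bab$ is not a factor. So no amount of juggling with first returns or left/right extensions can close the gap you yourself flag at the end; the ``gluing'' obstacle for $k<N_a$ is genuine and insurmountable on your hypotheses.

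What you are missing is the standing hypothesis that $\omega\in\Omega$, i.e.\ that $\omega$ realizes the \emph{minimum} value of $\#\Pal$ among binary words closed under reversal. The paper's proof is an extremality argument: assume $a^k\in\Fac(\omega)$ but $ba^kb\notin\Fac(\omega)$. Since $ba^{N_a}b$ \emph{is} a factor (recurrence plus maximality of $N_a$), necessarily $k<N_a$. Now build a new word $\nu$ from a tail of $\omega$ beginning in $b$ by shortening every maximal $a$-block of length $>k$ by one (i.e.\ replace each $ba^jb$ with $ba^{j-1}b$ for $k<j\le N_a$). One checks that $\nu$ is still closed under reversal and that there is an injection $\Pal(\nu)\hookrightarrow\Pal(\omega)$, obtained by re-inflating the long $a$-runs, which is not surjective because $a^k$ is not in its image. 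Hence $\#\Pal(\nu)<\#\Pal(\omega)$, contradicting minimality. In short, the lemma is not a structural fact about recurrent reversal-closed words; it is a consequence of $\omega$ being extremal.
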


\begin{proof} Suppose to the contrary that $a^k$ is a factor of $\omega$ but not $ba^kb.$  By Lemma~\ref{recurrent}, $\omega$ is recurrent and hence $ba^{N_a(\omega)}b$ is a factor of $\omega.$ Hence $ 1\leq k<N_a(\omega).$ 
Let $\omega'$ be a tail of $\omega$ beginning in $b.$ Let $\nu\in \{a,b\}^\nats$ be the word obtained from $\omega'$ by
replacing all occurrences of $ba^jb$ in $\omega'$  by $ba^{j-1}b$  for each $k+1\leq j\leq N_a(\omega).$ 
Thus $N_a(\nu)=N_a(\omega)-1.$ It is readily verified that $\nu$ is closed under reversal. Moreover, to every palindrome $v$ in $\nu$ corresponds a unique palindrome $w$ in $\omega$ obtained from $v$ by increasing the $a$ runs in $v$ of length $\geq k$ by one unit and leaving the other $a$ runs the same. This defines an injection
$\phi: \Pal(\nu) \hookrightarrow \Pal(\omega)$ which is not a surjection since in particular $a^k$ is not in the image of $\phi.$ Thus  $\#\Pal(\nu)<\#\Pal(\omega),$ contradicting that $\omega$ had the least number of palindromic factors amongst all binary words closed under reversal.
\end{proof} 

\begin{lemma}\label{asquare} If $\#\Pal(\omega)\leq 12$, then $a^2$ and $b^2$ are factors of $\omega$.
\end{lemma}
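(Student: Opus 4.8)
The plan is to argue by contradiction: assume $\#\Pal(\omega)\le 12$ and suppose that $aa$ is \emph{not} a factor of $\omega$, the case of a missing $bb$ being identical after exchanging the roles of $a$ and $b$ (this exchange preserves membership in $\Omega$ and the value $\#\Pal$). By Lemma~\ref{recurrent}, $\omega$ is recurrent and, having finitely many palindromes, aperiodic. Since $aa$ does not occur, every $a$ is isolated, so $\omega$ decomposes as a sequence of maximal $b$-runs separated by single $a$'s. Let $N=N_b(\omega)$ and let $s$ be the infinite word over $\{1,\dots,N\}$ recording the successive $b$-run lengths. If $N\le 1$ then $\omega$ avoids both $aa$ and $bb$, forcing $\omega\in\{(ab)^\infty,(ba)^\infty\}$ and contradicting aperiodicity; hence $N\ge 2$. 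Moreover $s$ is aperiodic (otherwise $\omega$ would be ultimately periodic, hence periodic by recurrence), and by Lemma~\ref{bab} each $b^k$ with $k\le N$ occurs, so $ab^ka$ occurs, which means a run of length exactly $k$ occurs; thus every letter $1,\dots,N$ appears in $s$.

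Next I would partition $\Pal(\omega)$ according to the first (equivalently last) letter of each palindrome: those containing no $a$, those beginning and ending with $a$, and those beginning and ending with $b$ but containing an $a$. The no-$a$ palindromes are exactly $\varepsilon,b,b^2,\dots,b^N$, giving $N+1$ of them. The palindromes beginning and ending with $a$ are precisely the words $ab^{c_1}a\cdots ab^{c_m}a$ for which $(c_1,\dots,c_m)$ is a palindromic factor of $s$; reading off the complete runs between consecutive $a$'s shows this correspondence is a bijection onto $\Pal(s)$ (with $a\leftrightarrow\varepsilon$), so this family has exactly $\#\Pal(s)$ elements. Finally, for each $k\le N$ the factor $ab^ka$ occurs, by recurrence, in the interior of $\omega$, hence is flanked by $b$'s, so $bab^kab$ is a palindromic factor; together with $bab$ this yields at least $N+1$ palindromes beginning and ending with $b$. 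Since the three families are disjoint, I obtain
\[\#\Pal(\omega)\ \ge\ (N+1)+\#\Pal(s)+(N+1)\ =\ 2N+2+\#\Pal(s).\]

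It then remains to bound $\#\Pal(s)$ from below in each range of $N$. For $N\ge 4$ the trivial bound $\#\Pal(s)\ge N+1$ (the empty word and the $N$ letters) already gives $\#\Pal(\omega)\ge 3N+3\ge 15$. For $N=3$, the word $s$ is over a three-letter alphabet in which all letters occur, and such an aperiodic word must contain a palindrome of length $\ge 2$: otherwise $s$ would avoid every factor $xx$ and every factor $xyx$, forcing any three consecutive letters to be pairwise distinct and hence $s$ to be periodic of period $3$, a contradiction; thus $\#\Pal(s)\ge 5$ and $\#\Pal(\omega)\ge 8+5=13$. The genuinely delicate case is $N=2$: here $s$ is an infinite word over a \emph{binary} alphabet, so by the consequence of Lemma~\ref{lem:rich} recorded at the start of this section it satisfies $\#\Pal(s)\ge 9$, whence $\#\Pal(\omega)\ge 6+9=15$. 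In every case $\#\Pal(\omega)\ge 13>12$, contradicting the hypothesis; therefore $aa$, and symmetrically $bb$, must occur in $\omega$.

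The main obstacle I anticipate is making the bijection between the $a$-bordered palindromes of $\omega$ and the palindromic factors of $s$ fully rigorous — in particular verifying that each palindromic factor of $s$ is genuinely realized with bounding $a$'s and that distinct run-sequences give distinct factors of $\omega$ — since this is exactly the step that transfers the universal binary lower bound of $9$ to $\omega$ in the critical case $N=2$. The small-alphabet sub-arguments for $N=2$ and $N=3$ are the other places where care is required; by contrast the range $N\ge 4$ is immediate from the crude counts.
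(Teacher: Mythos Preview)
Your argument is correct and takes a genuinely different route from the paper's. The paper proceeds by computer verification: it lists all $850$ binary non-rich words of length $12$, observes that whenever $\Pal(u)$ fails to contain both $a^2$ and $b^2$ it equals one of four explicit $12$-element sets, and then eliminates each of these four sets using Lemma~\ref{bab} together with short ad~hoc extensions. Your proof avoids any machine check: you encode $\omega$ by the sequence $s$ of $b$-run lengths (since $aa$ is forbidden), partition $\Pal(\omega)$ into the $N+1$ pure $b$-powers, the $a$-bordered palindromes (in bijection with $\Pal(s)$), and at least $N+1$ further $b$-bordered palindromes $bab,\,bab^kab$ supplied by Lemma~\ref{bab} and recurrence, and then bound $\#\Pal(s)$ from below in each regime of $N$. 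The decisive step is that for $N=2$ the run-length word $s$ is itself binary, so Lemma~\ref{lem:rich} applied to a length-$9$ factor of $s$ containing both symbols gives $\#\Pal(s)\ge 9$.

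The anticipated obstacle with the bijection is harmless: since $\omega$ is recurrent you may replace it by a tail beginning with $a$ without changing $\Fac(\omega)$ or $\#\Pal(\omega)$, after which every factor $(c_1,\dots,c_m)$ of $s$ is realized as $ab^{c_1}a\cdots ab^{c_m}a$ in $\omega$ with the bounding $a$'s genuinely present; the converse direction is immediate because the $a$'s delimit complete $b$-runs. In the $N=3$ case your ``no $xx$ and no $xyx$ forces period $3$'' argument is fine, and in the $N=2$ case you should note (as you implicitly do) that $s$ uses both symbols so some length-$9$ factor of $s$ has $\alph=2$. What your approach buys is a conceptual, computer-free proof that moreover explains \emph{why} the bound jumps when one letter is isolated; what the paper's approach buys is brevity and an explicit list of the borderline $\Pal$-sets, which feeds directly into the subsequent case analysis of Theorem~\ref{13}.
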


\begin{proof} Since $\#\Pal(\omega)\leq 12$, no factor of $\omega$ of length $12$ is rich. There are $850$ binary non-rich words $u$ of length $12.$ For each such  $u$ we compute $\Pal(u).$ By computer verification, we observe that the only cases when $\Pal(u)$ does not contain both $a^2$ and $b^2$ is when  $\Pal(u)$
is equal to one of the following $4$ sets:
\begin{align*}
A&=\{\varepsilon , a, aba, abba, abbba, b, bab, babbab, babbbab, bb, bbabb, bbb\};\\
B&=\{\varepsilon, a, aba, abba, b, bab, babab, babbab, bb, bbababb, bbabb, bbb\};\\
C&=\{\varepsilon, b, bab, baab, baaab, a, aba, abaaba, abaaaba, aa, aabaa, aaa\}; \\
D&=\{\varepsilon, b, bab, baab, a, aba, ababa, abaaba, aa, aababaa, aabaa, aaa\}.
\end{align*}
For instance,  if $u=aaababaabaaa$ then $\Pal(u)=D.$ So either $\Pal(\omega)$ contains both $a^2$ and $b^2$ or $\Pal(\omega)$ contains one of $A,B,C,$ or $D.$ Note that each of the above sets is of cardinality $12.$ By Lemma~\ref{bab}, if $B\subseteq \Pal(\omega),$ then $\Pal(\omega)$ also contains $abbba $ and hence $\#\Pal(\omega)\geq 13,$ a contradiction. A similar argument shows that  $\Pal(\omega)$ cannot contain $D.$ 
Next, suppose $A\subseteq \Pal(\omega)$ and let us consider the possible right extensions of the palindrome $babbbab$ which avoid $a^2.$ We will put a dot $(.)$ to designate positions of choice. They are: 
$babbbab.ab$ (which yields a $13$th palindrome $babab),$ $babbbab.b.ab.ab$ (which yields a $13$th palindrome $babab),$
$babbbab.b.ab.b$ (which yields a $13$th palindrome $bbabbabb),$ and finally $babbbab.b.b$ (which yields a $13$th palindrome $bbbabbb).$ In either case $\#\Pal(\omega)\geq 13,$ a contradiction. A similar argument shows that $\Pal(\omega)$ cannot contain $C.$ Thus, if $\#\Pal(\omega)\leq 12$ then both $a^2$ and $b^2$ are factors of $\omega.$
\end{proof}

\noindent In view of Lemma~\ref{asquare}, we can suppose that both $a^2$ and $b^2$ belong to $\Pal(\omega)$ and hence, by Lemma~\ref{bab}, 
\[\{\varepsilon, a,b,aa,bb,aba, bab, abba, baab\}\subseteq \Pal(\omega).\]

If $\omega$ contains no palindromic factor of length greater than $4,$ then neither $aaa$ nor $bbb$ is a factor of $\omega$ (for otherwise by Lemma~\ref{bab} either $baaab$ or $abbba$ is a factor of $\omega).$
Hence we would have 
\[\{\varepsilon, a,b,aa,bb,aba, bab, abba, baab\}=\Pal(\omega).\]
This implies that $ababb$ is a factor of $\omega.$ But then the only complete  first return to $ababb$ is 
$ababbaababb$, which would imply that $\omega$ is periodic, a contradiction.
Thus, $\omega$ must contain a palindromic factor of length $5$ or of length $6.$\\

\noindent {\it Case 1:} $\omega$ contains a palindromic factor of length $5.$ \\

\noindent Without loss of generality we can suppose \[\{aabaa, babab, baaab, aaaaa\}\cap \Pal(\omega)\neq \emptyset . \] 
\noindent {\it Case 1.1:} $aabaa \in \Pal(\omega).$  Thus $\#\Pal(\omega)\geq 10.$ By considering the possible bilateral extensions of $aabaa,$
we have \[\{aaabaaa, aaabaab, baabaab\}\cap \Fac(\omega) \neq \emptyset.\]

\noindent {\it Case 1.1.1:} $aaabaaa \in \Fac(\omega).$ This gives rise to $3$ additional palindromes: $aaabaaa, aaa, baaab$ (where $baaab$ is a consequence of $aaa$ and Lemma~\ref{bab}).  Hence $\#\Pal(\omega)\geq 13.$\\

\noindent {\it Case 1.1.2:} $aaabaab \in \Fac(\omega).$ This gives rise to $2$ additional palindromes: $aaa, baaab$ so that $\#\Pal(\omega)\geq 12.$
If  $aaaa \in \Fac(\omega),$  then $\#\Pal(\omega)\geq 13.$ So we can assume that $aaaa \notin \Fac(\omega),$ in which case $abaaab \in \Fac(\omega)$ since $aaabaab$ occurs in $\omega$ preceded by $b$ and $\omega$ is closed under reversal. We leave the following technical claim for the reader:

\begin{claim} Under the conditions of Case 1.1.2, either $\#\Pal(\omega)\geq 13,$ or every complete first return to $abaaab$ is of the form $abaaab\,(babaab)^nb\,abaaab$ for $n\geq 0.$ \end{claim}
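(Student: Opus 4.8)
The plan is to analyze the local structure forced by the hypotheses of Case 1.1.2, namely that $abaaab \in \Fac(\omega)$, that $aaaa \notin \Fac(\omega)$, and that $\omega$ is closed under reversal with $\#\Pal(\omega) \le 12$ (otherwise the first alternative of the claim holds and we are done). Under these constraints the relevant palindrome set is already forced to contain
\[\{\varepsilon, a, b, aa, bb, aba, bab, abba, baab, aaa, baaab\},\]
which has eleven elements, so I have a budget of at most one additional palindrome before reaching the contradiction $\#\Pal(\omega) \ge 13$. The strategy is to track how $abaaab$ can be extended: every complete first return $r$ to $abaaab$ must be examined, and I would show that any return which is \emph{not} of the stated form $abaaab\,(babaab)^n b\,abaaab$ produces at least two new palindromes (pushing the count to $13$), so that the only surviving possibility is the claimed family.

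First I would set up the return analysis concretely. Since $abaaab$ occurs in $\omega$ and $aaaa$ is forbidden, the letter following the suffix $aab$ must be $a$ (giving $\dots aaba\dots$, continuing the block structure) rather than extending an $a$-run illegitimately; symmetrically, closure under reversal constrains the left extensions. I would enumerate, step by step, the admissible continuations of $abaaab$ that avoid $aaaa$ and avoid creating a palindrome of length $\ge 7$ (each such long palindrome, e.g.\ $aaabaaa$ or a longer even palindrome, would immediately supply the twelfth and thirteenth elements via Lemma~\ref{bab}). The avoidance of $aaaa$ forces the $a$-runs to have length exactly $3$ inside $abaaab$ and length $\le 2$ elsewhere, which severely limits the combinatorics: after $abaaab$ the word must continue with a block over $\{a,b\}$ containing only short $a$-runs, and I would check that the only way to return to $abaaab$ without introducing a new palindrome is to insert copies of $babaab$ (a non-palindromic block of length $6$ that reuses only already-present palindromic factors) before closing with $b\,abaaab$.

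The key verification is that the candidate return word $abaaab\,(babaab)^n b\,abaaab$ genuinely introduces \emph{no} new palindrome for any $n \ge 0$, while every deviation does. For the positive direction I would exhibit, for the base case $n=0$ (the return $abaaab\,b\,abaaab$) and the inductive block $babaab$, that every factor which is a palindrome already lies in the eleven-element set above; here the self-similar block structure means it suffices to check palindromes up to a bounded length, since a palindrome longer than the block cannot be centered so as to avoid spanning a forbidden $aaaa$ or forcing a length-$\ge 7$ palindrome. For the negative direction, I would argue by cases on the first point of deviation from the pattern: any alternative admissible letter choice either creates $aaaa$ (hence $\#\Pal \ge 13$ by Lemma~\ref{bab} giving $baaaab$ and $aaaa$ together), or creates one of the short palindromes $babab$, $bbabb$, $aabaa$, $abbba$, or an even palindrome such as $bbaabb$, each of which is new relative to the forced set.

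The main obstacle I anticipate is the bookkeeping in the negative direction: ruling out every non-conforming first return requires a careful, essentially exhaustive case analysis of the finitely many admissible local extensions, and one must be vigilant that a single deviation really forces \emph{two} new palindromes (not just one) in order to cross from the budget of $12$ up to $13$ — in borderline cases a single new short palindrome only yields $\#\Pal(\omega)=12$, so one must then chase the consequences (again via recurrence, closure under reversal, and Lemma~\ref{bab}) until a second new palindrome is compelled. This interplay, where one new factor propagates through closure under reversal to force another, is precisely why the statement is phrased as a dichotomy rather than a clean structural equality, and it is the part of the argument most prone to missed subcases, which is presumably why the authors designate it a technical claim left to the reader.
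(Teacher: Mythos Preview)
The paper leaves this claim to the reader, so there is no written proof to compare against; your plan---enumerate admissible extensions of $abaaab$ under the constraints $aaaa\notin\Fac(\omega)$ and closure under reversal, check that the canonical returns $abaaab(babaab)^n b\,abaaab$ introduce no new palindrome, and show every deviation does---is exactly the intended kind of verification. The observation that $babaab\in[v]$ (so $(babaab)^\infty$ has only the nine base palindromes) is the right structural reason why these returns are ``safe''.

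However, your bookkeeping is off in a way that matters. You list the forced palindrome set as
\[\{\varepsilon, a, b, aa, bb, aba, bab, abba, baab, aaa, baaab\},\]
calling it eleven elements with a budget of one. But Case~1.1.2 is a subcase of Case~1.1, which already placed $aabaa\in\Pal(\omega)$; the paper explicitly says $\#\Pal(\omega)\ge 12$ at this point. So the forced set has twelve elements and your budget is \emph{zero}: a single new palindrome already yields $\#\Pal(\omega)\ge 13$. This makes the analysis strictly easier than you feared---you do not need to chase a second new palindrome via Lemma~\ref{bab} in the deviation cases---but it also means that when you later list $aabaa$ among the ``new short palindromes'' a deviation might create, that is incorrect, since $aabaa$ is already present. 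Similarly, your sentence ``the letter following the suffix $aab$ must be $a$'' is garbled: both $abaaab\cdot a$ and $abaaab\cdot b$ avoid $aaaa$, so no such restriction holds at that step. Fixing the count to twelve and redoing the extension enumeration with a zero budget will give a clean and correct argument.
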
 

This implies that $aabb$ is a factor of $\omega$ but not $bbaa,$ a contradiction. \\

\noindent {\it Case 1.1.3:} $baabaab \in \Fac(\omega).$ In this case $\#\Pal(\omega)\geq 11.$ We leave the following technical claim for the reader:

\begin{claim} Under the conditions of Case 1.1.3, either $\#\Pal(\omega)\geq 13,$ or every complete first return to $baabaab$ is either of the form $baabaab\,(babaab)^n\,aab$ or of the form $baabaab\,(abbaab)^naab$ for $n\geq 1.$ \end{claim}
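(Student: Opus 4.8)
The plan is to assume $\#\Pal(\omega)\le 12$ (otherwise the first alternative of the claim holds) and to deduce the stated form of the complete first returns by a forward extension argument. First I would record the reduction that forces short runs: if $aaa\in\Fac(\omega)$ then by Lemma~\ref{bab} also $baaab\in\Fac(\omega)$, so $\{aaa,baaab\}$ would be two palindromes beyond the eleven already present, giving $\#\Pal(\omega)\ge 13$; the symmetric argument applies to $bbb$ and $abbba$. Hence, under our assumption, neither $aaa$ nor $bbb$ is a factor, so every maximal run of a letter has length at most $2$. Combined with the hypotheses of Case~1.1.3 this pins down the known palindromes $\{\varepsilon,a,b,aa,bb,aba,bab,abba,baab,aabaa,baabaab\}$ and leaves room for at most one further palindrome before the count reaches $13$.

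Next I would analyse the continuation of a fixed occurrence of $v=baabaab$ to the right, letter by letter, until the next occurrence of $v$ is produced; this determines the complete first returns since $\omega$ is recurrent by Lemma~\ref{recurrent}. At each step the next letter is constrained by three conditions: it may not create $aaa$ or $bbb$; once the single extra palindrome has been used it may not create any palindrome outside the eleven listed ones; and, via closure under reversal, any factor $u$ so produced forces $\tilde u\in\Fac(\omega)$, which may in turn force a further palindrome. The key local computation is that the two letters admissible immediately after $v$ behave differently: the suffix $baabaaba$ already contains the palindrome $abaaba$, while $baabaabb$ postpones it; in both cases one checks that $abaaba$ is the unique admissible extra palindrome and that every competing choice, which would introduce one of $ababa$, $babab$, $bbabb$, $aabbaa$, $babbab$, $bbaabb$, forces a thirteenth palindrome within a bounded number of steps.

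Carrying the extension through, I would show that the branch beginning $baabaab\,a\cdots$ is forced to read $baabaab\,(abbaab)^{n}aab$ and that the branch beginning $baabaab\,b\cdots$ is forced to read $baabaab\,(babaab)^{n}aab$, the two families being exchanged by reversal, so that it suffices to resolve one of them in full and reflect. Along the way the exponent must satisfy $n\ge 1$: the degenerate value $n=0$ would make the return word equal to $baabaabaab$, which is itself a palindrome and hence a thirteenth one, a contradiction. This yields exactly the two patterns in the statement.

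The main obstacle is the size and bookkeeping of this extension tree rather than any single step: one must verify that every branch not conforming to the two admissible patterns produces a palindrome outside the allowed twelve, and that the surviving patterns introduce only $abaaba$. Because runs have length at most $2$ and the palindrome budget is tight, each nonconforming branch terminates in a contradiction after only a few letters, so the whole verification is finite and can be organised as a (computer-assisted) exhaustion in the spirit of Lemma~\ref{asquare}; the role of closure under reversal here is both to halve the casework by pairing the two forms and to detect forbidden palindromes arising from reflected factors.
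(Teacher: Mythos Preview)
The paper explicitly leaves this claim to the reader, so there is no proof in the paper to compare against. Your plan---first forcing $aaa,bbb\notin\Fac(\omega)$ via Lemma~\ref{bab}, then identifying $abaaba$ as the unique twelfth palindrome compatible with $\#\Pal(\omega)\le 12$, and finally running a finite forward-extension tree from $baabaab$ with the reversal symmetry halving the work---is precisely the intended verification, and your observation that the two return families $baabaab(babaab)^naab$ and $baabaab(abbaab)^naab$ are reversals of one another is correct.

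One small sharpening: in your $n\ge 1$ argument, the word $baabaabaab$ is not just ``a thirteenth palindrome'' sitting on top of eleven; at that stage only eleven palindromes are guaranteed, but $baabaabaab$ contains the three new palindromes $abaaba$, $aabaabaa$, and $baabaabaab$ itself, so the count reaches $14\ge 13$ regardless of whether $abaaba$ has already been accounted for. With that adjustment your outline is sound.
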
 

Since $\omega$ is closed under reversal, both forms must actually occur. But the switch from one form to the other will produce two new palindromes: $abaabaaba$, $babaabaabab.$ In either case
$\#\Pal(\omega)\geq 13.$ This completes Case 1.1.\\

\noindent {\it Case 1.2:} $babab \in \Pal(\omega).$ Thus $\#\Pal(\omega)\geq 10.$ By considering the possible bilateral extensions of $babab$, we have \[\{abababa, abababb, bbababb\}\cap \Fac(\omega) \neq \emptyset.\]

\noindent {\it Case 1.2.1:} $abababa \in \Fac(\omega).$ This gives rise to $2$ additional palindromes: $ababa, abababa$ so that $\#\Pal(\omega)\geq 12.$ But then every bilateral extension adds a $13$th palindrome: either $aabababaa$ or $bababab.$\\

 \noindent {\it Case 1.2.2:} $abababb \in \Fac(\omega).$ This gives rise to the additional palindrome $ababa$ so that $\#\Pal(\omega)\geq 11.$ We leave the following technical claim for the reader:

\begin{claim} Under the conditions of Case 1.2.2, either $\#\Pal(\omega)\geq 13,$ or every complete first return to $ababa$ is either of the form $ababa\,(bbaaba)^n\,ba$ or of the form $ababa\,(abbaba)^nba$ for $n\geq 0.$ \end{claim}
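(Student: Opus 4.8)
The plan is to classify the complete first returns to $ababa$ by extending this factor to the right one letter at a time, showing that any extension departing from the two claimed patterns forces a thirteenth palindrome. Under the standing hypotheses we already have the eleven palindromes $\{\varepsilon,a,b,aa,bb,aba,bab,abba,baab,babab,ababa\}$, and in the ``otherwise'' alternative we assume $\#\Pal(\omega)\le 12$, so at most one further palindrome may ever appear. Three tools drive the pruning: Lemma~\ref{bab} (so that $aaa$ forces $baaab$ and $bbb$ forces $abbba$, each producing two new palindromes and hence $\#\Pal(\omega)\ge 13$); a careful bookkeeping of palindromic suffixes at each node, where one must check not only the short suffixes but the whole current suffix, since a \emph{wrong} completion of a first return is itself typically a palindrome (for instance the would-be return $abababbaabbababa$ is a palindrome of length $16$, so it already counts as a new long palindrome); and closure under reversal, used to import factors and, through Lemma~\ref{recurrent}, to exclude periodic tails.

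Before growing the tree I would record the reversal symmetry. The two families $ababa(bbaaba)^{n}ba$ and $ababa(abbaba)^{n}ba$ are the reverses of one another, as one checks directly (e.g. $\widetilde{ababa\,bbaaba\,ba}=ababa\,abbaba\,ba$). Since $\omega$ is closed under reversal, the set of complete first returns to the palindrome $ababa$ is reversal invariant, and reversal preserves the number of palindromes. Hence the two halves of the analysis---extensions of $ababa$ beginning with the letter $b$ and those beginning with the letter $a$---are mirror images, and it suffices to carry out one of them in detail, say the extensions of $abababb$ (which exists by the Case~1.2.2 hypothesis), the other family then yielding $ababa(abbaba)^{n}ba$.

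I would then grow the tree from $ababab$. Reading the next letter as $a$ gives $abababa$, which puts us in Case~1.2.1 and hence $\#\Pal(\omega)\ge 13$ (this is also the degenerate $n=0$ member of the pattern); reading it as $b$ gives $abababb$. From here the forced continuation is the block $bbaaba$, i.e. $abababb\to abababba\to abababbaa\to abababbaab\to abababbaaba$, because at each step the alternative letter creates a forbidden run ($aaa$ or $bbb$, killed by Lemma~\ref{bab}) or a new palindromic suffix (possibly of length $6$, such as $babbab$ at the step $abababba\to abababbab$) that, upon bounded further forced extension, begets a second new palindrome. At the node $abababbaaba=ababa\cdot bbaaba$ a genuine choice appears: terminating with $ba$ closes the return as $ababa\,bbaaba\,ba$ (the $n=1$ word, which one verifies has exactly the eleven palindromes), while repeating the block $bbaaba$ continues the pattern; every other continuation is shown to reach $13$.

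The main obstacle is the handful of branches that do not create a new palindrome at once and must be pushed several letters further. The representative case is $abababbaab\to abababbaabb$, which produces only the single new palindrome $bbaabb$, leaving us momentarily at $12$. Here I would show that every way of avoiding yet another new palindrome---respecting the bans on $aaa,bbb$ and on the forbidden short palindromes outside the allowed list (e.g. $aabbaa$, $bbabb$, $aabaa$, $abaaba$), and noting that re-encountering $ababa$ would close the return as a long palindromic word---forces the continuation into a purely periodic tail, for instance $(aabbab)^{\infty}$. But then $\omega$ is ultimately periodic and closed under reversal, so by Lemma~\ref{recurrent} it is purely periodic with infinitely many palindromes, contradicting $\#\Pal(\omega)\le 12$; hence this branch too gives $\#\Pal(\omega)\ge 13$. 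Keeping the palindromic-suffix bookkeeping correct along these longer dead ends---especially detecting the long palindromic completions---is the delicate part. Once all dead ends are closed, the only surviving complete first returns beginning with $ababab$ are exactly the words $ababa(bbaaba)^{n}ba$, and by the reversal symmetry those beginning with $ababaa$ are exactly $ababa(abbaba)^{n}ba$, which is the claim.
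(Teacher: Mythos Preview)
The paper does not prove this claim at all; it explicitly ``leave[s] the following technical claim for the reader.'' Your plan---extend $ababa$ one letter at a time, prune every branch that accumulates a second new palindrome beyond the eleven already present, and use the reversal symmetry between the two families to halve the case analysis---is exactly the kind of verification the authors are inviting, and the symmetry observation (that $\widetilde{ababa(bbaaba)^nba}=ababa(abbaba)^nba$, together with reversal-invariance of the set of complete first returns to a palindrome) is correct and genuinely useful.

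Two places deserve tightening. First, the branch $abababbaabb$ does not need the periodicity argument: the next letter already finishes it, since $abababbaabbb$ contains $bbb$ (hence also $abbba$ by Lemma~\ref{bab}), while $abababbaabba$ has the new palindromic suffix $abbaabba$, so either way $\#\Pal(\omega)\ge 13$. Second, the heuristic ``re-encountering $ababa$ would close the return as a long palindromic word'' is not a general principle---the admissible returns $ababa(bbaaba)^nba$ with $n\ge 1$ are themselves non-palindromic---so you cannot invoke it as a blanket tool. What actually happens along the problematic side branches (for instance the $abababbab$ branch carrying the extra palindrome $babbab$) is that each specific way of reaching the next $ababa$ either passes through $bbb$/$aaa$, creates a short new palindrome such as $bbababb$, or lands on a return word that is in fact palindromic (e.g.\ $abababbababa$); this must be checked node by node rather than asserted. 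With those two adjustments your outline becomes a complete case analysis.
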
 

Since $\omega$ is closed under reversal, both forms must actually occur. But the switch from one form to the other will produce two new palindromes: $aababaa$, $baababaab.$ In either case
$\#\Pal(\omega)\geq 13.$ \\

\noindent {\it Case 1.2.3:} $bbababb \in \Fac(\omega).$ This gives rise to the additional palindrome $bbababb$, so that $\#\Pal(\omega)\geq 11.$ If $bbb \in \Fac(\omega),$ then $\Pal(\omega)$ would contain $2$ additional palindromes (namely, $bbb$ and $abbba).$ So we can assume that  $bbb \notin \Fac(\omega),$ in which case
$abbababba \in \Fac(\omega)$, so that $\#\Pal(\omega)\geq 12.$ If $abbababba $ extends on either side by $b,$ we would get the $13$th palindrome $babbab.$ Otherwise, $aabbababbaa \in \Fac(\omega).$ In either case, 
$\#\Pal(\omega)\geq 13.$ This completes Case 1.2.\\

\noindent {\it Case 1.3:} $baaab \in \Pal(\omega).$ In this case
\[\{\varepsilon, a, b, aa, bb, aba, bab, abba, baab, aaa, baaab\}\subseteq \Pal(\omega),\]
 and hence $\#\Pal(\omega)\geq 11.$ If either $aaaa$ or $bbb$ is a factor of $\omega,$ this would give rise to $2$ additional palindromes, whence $\#\Pal(\omega)\geq 13.$ So we can assume that $aaaa$ and $bbb$ are not factors
of $\omega$ and in view of cases 1.1 and 1.2, that $baaab$ is the only palindromic factor of $\omega $ of length $5.$
By considering the possible bilateral extensions of $baaab$, we have \[\{bbaaabb, abaaaba, abaaabb\}\cap \Fac(\omega) \neq \emptyset.\]\\
\noindent {\it Case 1.3.1:} $bbaaabb \in \Fac(\omega).$ But then so is $abbaaabba$ whence $\#\Pal(\omega)\geq 13.$ \\

\noindent {\it Case 1.3.2:} $abaaaba \in \Fac(\omega).$ So $\#\Pal(\omega)\geq 12.$ But any bilateral extension of $abaaaba$ adds a $13$th palindrome: either $aabaa$ or $babaaabab.$ In either case  $\#\Pal(\omega)\geq 13.$ \\
 
\noindent {\it Case 1.3.3:} $abaaabb \in \Fac(\omega).$ So $\#\Pal(\omega)\geq 11.$ We leave the following technical claim for the reader:

\begin{claim} Under the conditions of Case 1.3.3 (which include that $aaaa$ and $bbb$ are not factors
of $\omega$ and  that $baaab$ is the only palindromic factor of $\omega $ of length $5)$  either $\#\Pal(\omega)\geq 13,$ or every first return to $baaab$ is of one of $4$ types: 
\begin{itemize}
\item  $x_n=baaab(baabab)^n$ for some $n\geq 1$;
\item $y_n=baaab(babaab)^n$ for some $n\geq 1$;
\item $w_n=baaab(abbaab)^nab$ for some $n\geq 0$;
\item $z_n=baaab(babaab)^nba$ for some $n\geq 0.$
\end{itemize}
\end{claim}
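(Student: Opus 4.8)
The plan is to carry out a finite automaton-style analysis of the admissible factors of $\omega$ under the accumulated constraints of Case 1.3.3, namely that $\omega$ is closed under reversal, that neither $aaaa$ nor $bbb$ is a factor, that $baaab$ is the only palindromic factor of length $5$, and that $abaaabb\in\Fac(\omega)$. Under these hypotheses $\Pal(\omega)$ already contains the eleven palindromes listed, and the goal is a dichotomy: either a twelfth and thirteenth palindrome is forced, or every first return to $baaab$ is one of the four enumerated words $x_n,y_n,w_n,z_n$. The first step is to record the local obstructions: since $aaaa$ and $bbb$ are forbidden, every maximal run of $a$'s has length $\le 3$ and every maximal run of $b$'s has length $\le 2$; since $baaab$ is the unique palindrome of length $5$, the factors $aabaa$, $babab$, $bbabb$, $aaaaa$ are all forbidden, and — using Lemma~\ref{bab} together with closure under reversal — further short palindromes such as $aaa$-centered or $bb$-centered palindromes of length $6,7$ are excluded, since any of them would immediately supply a twelfth or thirteenth palindrome and terminate the argument.

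The main step is to track what can legally follow an occurrence of $baaab$. First I would argue that immediately after the suffix $\ldots aaab$ the next letter extends a $b$-run of length $1$, and I would enumerate the short continuations $baaabc\ldots$ letter by letter, at each branch either hitting a forbidden run/palindrome (which ends that branch or forces a new palindrome, giving $\#\Pal(\omega)\ge 13$) or surviving into one of a small number of periodic-looking blocks. The key observation to isolate is that, avoiding the forbidden factors, the word between consecutive occurrences of $baaab$ must be built from the blocks $baabab$, $babaab$, $abbaab$, governing the four families: iterating $baabab$ gives $x_n$, iterating $babaab$ gives $y_n$, the block $abbaab$ closed off by a trailing $ab$ gives $w_n$, and $babaab$ closed off by a trailing $ba$ gives $z_n$. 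At each point where two different blocks could be concatenated, I would check that the junction either reproduces one of the allowed patterns or creates a new palindrome; the bookkeeping is that any illegal junction manufactures a short even palindrome (such as $aabbaa$, $abbba$, or a length-$6$ palindrome centered at a $bb$ or $aa$) not already in the list.

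The hardest part will be organizing the case split on the right extensions of $baaab$ so that it is genuinely exhaustive without exploding combinatorially: one must verify that the three blocks above are the only survivors and that no mixed or longer connector slips through, and in particular that the two ``closing'' suffixes $ab$ and $ba$ in $w_n$ and $z_n$ are forced rather than optional. I would manage this by always extending to the first reoccurrence of the pattern $baaab$ and using closure under reversal to prune: whenever a candidate extension is admitted, its reversal must also be admitted, and comparing a block against the reversal of its neighbor is what eliminates the spurious cases. Since $N_a(\omega)$ and $N_b(\omega)$ are finite (as $\#\Pal(\omega)<+\infty$), this search over bounded runs is finite, and the analysis — while tedious — closes in finitely many branches, which is presumably why the authors relegate it to the reader.
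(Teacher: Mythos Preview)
The paper does not actually prove this claim: it is one of the four technical claims explicitly ``left to the reader'', so there is no proof in the paper to compare your proposal against. Your outline is exactly the kind of finite letter-by-letter extension analysis the authors have in mind, and the overall architecture --- list the forbidden short factors, extend $baaab$ to the right until either a new palindrome is forced or the next occurrence of $baaab$ is reached, and observe that the surviving extensions are periodic with one of a few period-$6$ blocks --- is the natural and essentially only approach.

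A couple of small corrections to your bookkeeping. First, your list of forbidden length-$5$ palindromes omits $ababa$: once $aaaa$ and $bbb$ are excluded, the length-$5$ palindromes that are \emph{genuinely} ruled out by ``$baaab$ is the only one'' are $aabaa$, $babab$, and $ababa$ (the others you list, $bbabb$ and $aaaaa$, are already killed by the run bounds). Forgetting $ababa$ will bite you in the extension analysis, since several branches die precisely because they create $ababa$. Second, your example of a ``new short palindrome'' at a bad junction includes $abbba$, but $abbba$ contains $bbb$ and so cannot occur at all under the standing hypotheses; the new palindromes you will actually see at junctions are things like $aabbaa$, $bbaaabb$, $abaaaba$, or longer even palindromes.

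Finally, the most delicate point --- which you correctly flag --- is not merely that the three blocks $baabab$, $babaab$, $abbaab$ appear, but that within a \emph{single} first return they cannot be mixed, and that the trailing $ab$ (for $w_n$) and $ba$ (for $z_n$) are forced rather than optional. Your plan to check each possible junction of blocks is the right way to establish this; just be aware that ruling out, e.g., $baaab(baabab)^n ba\cdot baaab$ requires a slightly longer look-ahead than the length-$5$ constraints alone, and in some branches the contradiction comes from producing \emph{two} new palindromes (reaching $13$) rather than a directly forbidden factor.
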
 

We now consider two consecutive first returns to $baaab$ in $\omega$. If any combination from the following set should occur:
\[\{x_nx_m, x_ny_m, x_nz_m, y_nx_m, y_ny_m, y_nz_m, w_nx_m, w_ny_m, w_nz_m\},\]
then $\omega$ would contain $2$ additional palindromic factors: $bbaaabb$, $abbaaabba.$  So, either $\#\Pal(\omega)\geq 13$ or none of the above combinations occurs in $\omega$. But if none of the above combinations occurs in $\omega$, since $\omega$ is closed under reversal, this implies that $w_n$ does not occur in $\omega$ (since $w_n$ can only be followed by $w_m)$, and hence neither does $z_n.$ Since $x_n$ and $y_n$ can only be followed by $w_m,$ it follows that $x_n$ and $y_n$ also do not occur in $\omega$. Since $\omega$ is recurrent, some first return to $baaab$ must occur. Thus, even in this case we have $\#\Pal(\omega)\geq 13.$ This completes Case 1.3.\\

\noindent {\it Case 1.4:} $aaaaa \in \Pal(\omega).$ In this case
\[\{\varepsilon, a, b, aa, bb, aaa, aba, bab, aaaa, abba, baab, aaaaa, baaab, baaaab, baaaaab\}\subseteq \Pal(\omega),\] whence $\#\Pal(\omega)\geq 15.$ This completes Case 1.\\

\noindent {\it Case 2:} $\omega$ does not contain any palindromic factors of length $5.$\\

In this case, neither $aaa$ nor $bbb$ is a factor of $\omega$ (for otherwise by Lemma~\ref{bab} $\omega$ would contain a palindrome of length $5).$
But in view of Proposition~\ref{5&6}, $\omega$ must contain a palindromic factor of length $6.$ Without loss of generality this implies that 
\[\{aabbaa, babbab\}\cap \Pal(\omega) \neq \emptyset.\]

\noindent {\it Case 2.1:} $aabbaa \in \Pal(\omega).$ In this case, $baabbaab \in \Pal(\omega)$ so that
$\#\Pal(\omega)\geq 11.$ We now consider the possible bilateral extensions of $baabbaab.$ 
The extension $abaabbaaba$ gives rise to $2$ additional palindromes: $abaabbaaba$, $babaabbaabab$ where the second follows from the fact that $\omega$ does not contain any palindromic factors of length $5.$
The extension $abaabbaabb$ gives rise to $2$ additional palindromes: $bbaabb$, $abbaabba.$ Finally, the last extension $bbaabbaabb$ gives rise to $2$ additional palindromes: $bbaabb$, $bbaabbaabb.$ In either case, $\#\Pal(\omega)\geq 13.$\\

\noindent {\it Case 2.2:} $babbab \in \Pal(\omega).$ Since $\omega$ contains no palindromic factor of length $5$, the only bilateral extension of $babbab$ is $ababbaba.$ So $\#\Pal(\omega)\geq 11.$ Again, since $\omega$ contains no palindromic factor of length $5,$ the only bilateral extension of $ababbaba$ is $aababbabaa$, which gives rise to $2$ additional palindromes: $aababbabaa$, $baababbabaab.$ Thus, $\#\Pal(\omega)\geq 13.$ This completes Case 2 and the proof of Theorem~\ref{13}.
\end{proof}

\begin{remark}
 One can wonder what happens for infinite words that are generated by morphisms. Actually, Tan \cite{BoTan} proved that if $\omega $ is the fixed point of a primitive morphism, then $\omega $ is closed under reversal if and only if $\#\Pal(\omega)=\infty.$
\end{remark}

\section{Acknowledgements}

The second author is partially supported by a FiDiPro grant (137991) from the Academy of Finland, and by ANR grant {\sl SUBTILE}. 


\end{document}